\documentclass[sn-mathphys-ay]{sn-jnl}


\usepackage{graphicx}%
\usepackage{multirow}%
\usepackage{amsmath,amssymb,amsfonts}%
\usepackage{amsthm}%
\usepackage{mathrsfs}%
\usepackage[title]{appendix}%
\usepackage{xcolor}%
\usepackage{textcomp}%
\usepackage{manyfoot}%
\usepackage{booktabs}%
\usepackage{algorithm}%
\usepackage{algorithmicx}%
\usepackage{algpseudocode}%
\usepackage{listings}%
\usepackage{subcaption}

\algrenewcommand\algorithmicrequire{\textbf{Input:}}
\algrenewcommand\algorithmicensure{\textbf{Output:}}
\usepackage[capitalize,nameinlink,noabbrev]{cleveref}

\newcommand{\ket}[1]{ |{#1}  \rangle}
\newcommand{\poly}{\operatorname{poly}}
\DeclareMathOperator*{\argmin}{arg\,min}



\newtheorem{theorem}{Theorem}
%
\newtheorem{lemma}[theorem]{Lemma}
\newtheorem{fact}[theorem]{Fact}
\newtheorem{corollary}[theorem]{Corollary}
\newtheorem{remark}{Remark}%

\newtheorem{definition}{Definition}%
\Crefname{equation}{Eq.}{Eqs.}

\raggedbottom

\begin{document}

\title[Article Title]{Hybrid Quantum-Classical Algorithm For Robust Optimization via Stochastic-Gradient Online Learning}

\author[1,2]{\fnm{Debbie} \sur{Lim}}\email{limhueychih@gmail.com}
\equalcont{These authors contributed equally to this work.}

\author[1,3]{\fnm{Joao F.} \sur{Doriguello}}\email{doriguello@renyi.hu}
\equalcont{These authors contributed equally to this work.}

\author[1,4]{\fnm{Patrick} \sur{Rebentrost}}\email{cqtfpr@nus.edu.sg}

\affil[1]{\orgdiv{Centre for Quantum Technologies}, \orgname{National University of Singapore},\\ \orgaddress{\street{3 Science Drive 2}, \city{Singapore}, \postcode{117543}, \country{Singapore}}}

\affil[2]{\orgdiv{Center for Quantum Computing Science}, \orgname{University of Latvia}, \\
\orgaddress{\street{Raina Bulvaris 19}, \city{Riga}, \postcode{LV-1586}, \country{Latvia}}}

\affil[3]{\orgdiv{HUN-REN Alfr\'{e}d R\'{e}nyi Institute of Mathematics},\\
\orgaddress{\street{Re\'{a}ltanoda utca 13-15}, \city{Budapest}, \postcode{053}, \country{Hungary}}}

\affil[4]{\orgdiv{Department of Computer Science}, \orgname{National University of Singapore}, \\
\orgaddress{\street{13 Computing Drive}, \city{Singapore}, \postcode{117417}, \country{Singapore}}}


\abstract{Optimization theory has been widely studied in academia and finds a large variety of applications in industry.
The different optimization models in their discrete and/or continuous settings have catered to a rich source of research problems. Robust convex optimization is a branch of optimization theory in which the variables or parameters involved have a certain level of uncertainty. In this work, we consider the online robust optimization meta-algorithm by Ben-Tal \textit{et al.}\ and show that for a large range of stochastic subgradients, this algorithm has the same guarantee as the original non-stochastic version. We develop a hybrid quantum-classical version of this algorithm and show that an at most quadratic improvement in terms of the dimension can be achieved. The speedup is due to the use of quantum state preparation, quantum norm estimation, and quantum multi-sampling. We apply our quantum meta-algorithm to examples such as robust linear programs and robust semidefinite programs and give applications of these robust optimization problems in finance and engineering. }

\keywords{robust optimization, quantum computing, online learning}



\maketitle

\section{Introduction}
\subsection{Robust optimization and online algorithms}
Robust convex optimization is a form of convex optimization with the additional restriction that the optimization data/parameters belong to an uncertainty set and all constraints must hold for every parameter in the set. More precisely, the task is to
\begin{eqnarray}
\nonumber\text{minimize} & & f_0(x)\\
\nonumber\text{subject to} & & f_i(x, u_i)\leq 0, \hspace{1cm}\forall  u_i\in\mathcal{U}, \quad i\in[m] := \{1,\dots,m\},\\
\nonumber & & x\in\mathcal{D},
\end{eqnarray}
where $f_0,f_1,\dots,f_m$ are convex functions in the parameter $x$ and $f_1,\dots,f_m$ are concave in the noise vector $u$. Moreover, $\mathcal{D}\subseteq\mathbb{R}^n$ and the uncertainty set $\mathcal{U}\subseteq\mathbb{R}^d$ are both convex. Such a restrictive setting addresses the natural issue of data inaccuracies in many convex optimization problems, which can lead to substantial fluctuations in the solution. Robust convex optimization was first introduced by \cite{ben1998robust}, which had since then been explored and studied in more depth~\citep{ben2002robust,ben2009robust,bertsimas2011theory}. Despite its theoretical and empirical achievements, the computational cost of adopting robust optimization to large scale problems can become highly prohibitive. In order to tackle this issue, \cite{Ben-Tal2015a} proposed meta algorithms to approximately solve the robust counterpart of a given optimization problem using only an algorithm for the original optimization formulation. Their approach for achieving an efficient oracle-based robust optimization was to reduce the problem to a feasibility question and use a primal-dual technique based on online learning tools such as online (sub)gradient descent~\citep{zinkevich2003online}.

The work of \cite {Ben-Tal2015a} is one of the several recent results that employed online algorithms in the development of learning and optimization algorithms~\citep{bottou1998online, hazan2007logarithmic, mcmahan2010adaptive, shalev2012online, Hazan2016b, Hazan2016a, Fotakis2020, arora2009learning}. Unlike offline algorithms which return a solution based on a complete set of input data, online algorithms produce an output based on partial input data. Feedback in the form of a loss function is received from (a possibly adversarial) nature and the online algorithm uses information from the feedback, together with other pieces of input data, to  update its strategy and outputs a better solution in the next iteration. This process continues in a sequential manner throughout all iterations. There are two quantities often associated with online algorithms: the offline loss and the regret. The offline loss is defined as the minimum loss suffered when choosing the same strategy that minimizes the losses throughout all iterations. On the other hand, the regret is the difference between the offline loss and the loss from some sequence of strategies. Besides taking computational resources such as time and space into account, the design of online algorithms also aims to minimize the regret.  One of the attributes of online algorithms is that their regret bounds hold even for worst-case inputs. 

\subsection{Problem setting}

The robust optimization meta-algorithm of \cite{Ben-Tal2015a} is an online algorithm that uses a subroutine for the original optimization formulation. At each iteration $t$, the noise vectors $u^{(t)}_1,\dots,u^{(t)}_m$ are updated using online (sub)gradient descent as 
\begin{align*}
    u_i^{(t+1)} = \mathcal{P}_{\mathcal{U}} \left(u_i^{(t)} + \eta^{(t)}\nabla_u f_i\big(x^{(t)},u_i^{(t)}\big)\right),
\end{align*}
where $\mathcal{P}_{\mathcal{U}}(u) = \arg\min_{v\in\mathcal{U}}\|u-v\|_2$ is the Euclidean projector operator onto the set $\mathcal{U}$ and $\eta^{(t)}$ is a step size. Once $u_1^{(t+1)},\dots,u_m^{(t+1)}$ are obtained, the authors assume the availability of an optimization oracle $\mathcal{O}_\epsilon$ for the original convex optimization problem that, given an input $(u_1,\dots,u_m)\in\mathcal{U}^m$, outputs either $x\in\mathcal{D}$ that satisfies
\begin{align*}
    f_i(x,u_i) \leq \epsilon, \quad i\in[m],
\end{align*}
or \texttt{INFEASIBLE} if there is no $x\in\mathcal{D}$ for which
\begin{align*}
    f_i(x,u_i) \leq 0, \quad i\in[m].
\end{align*}
The new parameter $x^{(t+1)} = \mathcal{O}_\epsilon(u_1^{(t+1)},\dots,u_m^{(t+1)})$ is then used in the next iteration. \cite{Ben-Tal2015a} proved that such an online meta-algorithm correctly concludes that the robust program is infeasible or outputs an $\epsilon$-approximate solution, i.e., outputs $x\in\mathcal{D}$ that meets each constraint up to $\epsilon$, using $O(D^2 G_2^2/\epsilon^2)$ calls to $\mathcal{O}_\epsilon$, where $D = \max_{u,v\in\mathcal{U}}\|u-v\|_2$ is the diameter of $\mathcal{U}$ and $G_2 \geq \|\nabla_u f_i(x,u)\|_2$, for every $x\in\mathcal{D}$ and $u\in\mathcal{U}$, is an upper bound on the subgradient $\ell_2$-norm.

\subsection{Our results}

In this work, we are interested in providing a quantum speedup for robust optimization in the online and oracle setting of \cite{Ben-Tal2015a}. We stay close to the near-optimal classical framework of Ben-Tal \emph{et al.} to show possible theoretical advantages. The near-optimal setting of this algorithm makes ``dequantization" hard as to our knowledge, no asymptotically better algorithm exists. To this end, we formally specify the input model which allows us to consider the total time complexity of the algorithm. We assume access to a quantum oracle $\mathcal{Q}_\nabla$ for computing the entries of the subgradient $(\nabla_u f_i(x,u_i))_j$ in superposition (\cite{Ben-Tal2015a} assume access to a classical oracle $\mathcal{O}_\nabla$ for computing the subgradients) and a classical oracle $\mathcal{O}_{\mathcal{P}}$ for performing the projection onto $\mathcal{U}$. 
We rephrase and generalize the classical algorithm of \cite{Ben-Tal2015a} in two ways: (i) we encompass the subgradients computation via a stochastic subgradient oracle $\mathcal{O}_g$ that outputs random vectors $g_1,\dots,g_m$ whose expectation values are proportional to the true subgradients $\nabla_u f_1(x,u),\dots,\nabla_u f_m(x,u)$. In their original work, \cite{Ben-Tal2015a} construct the oracle $\mathcal{O}_g$ simply by exactly computing the subgradients entry-wise using $\mathcal{O}_\nabla$; (ii) we use a \emph{stochastic}-(sub)gradient-descent update rule using the random vectors $g_1,\dots,g_m$. Adapting guarantees from online stochastic (sub)gradient descent, we show that the guarantees of the algorithm hold also for a large range of stochastic subgradients. We then propose a hybrid quantum-classical robust optimization meta-algorithm for approximately solving the robust problem that requires less calls to the quantum subgradient oracle $\mathcal{Q}_\nabla$ compared to the number of calls to the classical oracle $\mathcal{O}_{\nabla}$ from \cite{Ben-Tal2015a} if good bounds for the subgradients' $\ell_1$-norm are known\footnote{This happens when, e.g., the constraint functions $f_i(x, u_i)$ are known to be $L$-Lipschitz in $u_i$ $\forall i\in [m]$. In this case, every subgradient has $\ell_1$-norm $L$. Such a situation is common in models like Lasso and ReLU \citep{meng2024lipschitz, geoffrey2020robust, chen2020semialgebraic}. In many applications, the Lipschitz constant is known and can be computed analytically~\citep{avant2023analytical, bhowmick2021lipbab}.}. As a byproduct of such construction, the number of calls to the projection oracle $\mathcal{O}_\mathcal{P}$ is also reduced. The main idea behind our algorithm is to construct an $\ell_1$-sampled stochastic subgradient by measuring the quantum state
\begin{align*}
    \sum_{i=1}^m\sum_{j=1}^d \sqrt{\frac{|(\nabla_u f_i(x,u_i))_j|}{\sum_{k=1}^m \|\nabla_u f_k(x,u_k)\|_1}} |i,j\rangle
\end{align*}
several times. To this end,  we use quantum subroutines for multi-sampling from~\cite{Hamoudi2019,hamoudi2022preparing}.
We summarize our results in \Cref{results_table}.

\begin{table}[t]
\centering
\caption{The number of calls (up to a constant factor) to the subgradient, projection, and optimization oracles $\mathcal{O}_\nabla$/$\mathcal Q_\nabla$, $\mathcal{O}_{\mathcal{P}}$, and $\mathcal{O}_\epsilon$, respectively, in order to obtain an $\epsilon$-approximate solution to a robust convex optimization problem. Here $m$ denotes the number of noise vectors and $d$ is their dimension. In addition, $D \geq \max_{u,v\in\mathcal{U}}\|u-v\|_2$ is an upper bound on the $\ell_2$-diameter of the uncertainty set $\mathcal{U}$. The parameters $G_2,G_1,G_\infty \in\mathbb{R}_{>0}$ are such that $\|\nabla_u f_i(x,u_i)\|_2 \leq G_2$, $\sum_{k=1}^m \|\nabla_u f_k(x,u_k)\|_1 \leq G_1$, and $\max_{k\in[m]}\|\nabla_u f_k(x,u_k)\|_1 \leq G_\infty$ for all $x\in\mathcal{D}$ and $u_1,\dots,u_m\in\mathcal{U}$. Finally, $\delta\in(0,1/2)$ is the failure probability for our algorithm.}\label{results_table}
\renewcommand{\arraystretch}{2}
\begin{tabular}{|c|c|c|c|} 
\hline
Algorithm & Calls to $\mathcal O_\nabla$/$\mathcal Q_\nabla$ & Calls to $\mathcal O_\mathcal{P}$ & Calls to $\mathcal{O}_\epsilon$\\
\hline
\hline
\cite{Ben-Tal2015a} & $\frac{G_2^2 mdD^2}{\epsilon^2}$ & $\frac{G_2^2 mD^2}{\epsilon^2}$ & $\frac{G_2^2D^2}{\epsilon^2}$\\ 
\hline
Hybrid quantum-classical & $\frac{\sqrt{G_1G_\infty} G_2 \sqrt{md}D^2}{\epsilon^2} \log\!\left(\!\frac{DG_2}{\epsilon\delta}\!\right)$ & $\min\{G_1G_\infty, G_2^2 m\}\frac{D^2}{\epsilon^2}$ & $\frac{G_2^2 D^2}{\epsilon^2}$ \\ 
\hline
\end{tabular}
\end{table}

Our final complexity depends on the product $G_1G_\infty$ where $G_1$ and $G_\infty$ are such that $\sum_{k=1}^m \|\nabla_u f_k(x,u_k)\|_1 \leq G_1$ and $\max_{k\in[m]}\|\nabla_u f_k(x,u_k)\|_1 \leq G_\infty$ for all $x\in\mathcal{D}$ and $u_1,\dots,u_m\in\mathcal{U}$. The quantities $G_1,G_\infty$ can be interpreted as bounds on the $\ell_1$- and $\ell_\infty$-norms of the vector $(\|\nabla_u f_1(x,u_1)\|_1,\dots,\|\nabla_u f_m(x,u_m)\|_1)$. It is true that 
\begin{align*}
    \operatorname*{\max}_{i\in[m]}\|\nabla_u f_i(x,u_i)\|_1\sum_{k=1}^m \|\nabla_u f_k(x,u_k)\|_1 \leq m\operatorname*{\max}_{i\in[m]}\|\nabla_u f_i(x,u_i)\|_1^2 \leq md G_2^2
\end{align*}
(recall that $\|\nabla_u f_i(x,u)\|_2 \leq G_2$), therefore one can always take $G_1G_\infty = md G_2^2$. It turns out that if $G_1 G_\infty$ is comparable to $G_2^2$, then we obtain a quadratic improvement in both $m$ and $d$. If $\|\nabla_u f_i(x,u)\|_1 = O(G_2)$ for all $(x,u)\in\mathcal{D}\times\mathcal{U}$ and $i\in I\subseteq[m]$ with $|I|=\Omega(m)$, in which case $G_1 G_\infty = O(m G_2^2)$, then we obtain a quadratic improvement in $d$. On the other hand, if $\|\nabla_u f_i(x,u)\|_1 = O(d G_2)$ for all $(x,u)\in\mathcal{D}\times\mathcal{U}$ and only $i\in I\subseteq[m]$ with $|I| = O(1)$, while $\|\nabla_u f_i(x,u)\|_1 = O(d G_2/m)$ for the remaining $i\in[m]\setminus I$, then we obtain a quadratic improvement in $m$. Finally, if there is no good promise on the quantities $G_1,G_\infty$ and the best one can say is $G_1G_\infty = md G_2^2$, then our hybrid quantum-classical algorithm has the same complexity as the classical algorithm from \cite{Ben-Tal2015a}. There are thus two types of quantum advantage:
\begin{itemize}
    \item Quadratic advantage in $m$: the problem possesses a few constraint functions $f_i(x,u_i)$ that are very sensitive to their corresponding noise vectors $u_i$. The sensitivity here is measured by the subgradients $\nabla_u f_i(x,u)$. Any slight changes to these few noise vectors could cause the corresponding constraint function to change drastically. In other words, the uncertainty given by $\mathcal{U}$ is highly localized, e.g., as in robust portfolio optimization under market stress~\citep{Korn2022} and robust supply chain with a few critical bottlenecks~\citep{bertsimas2004robust};
    \item Quadratic advantage in $d$: most constraint functions are sensitive to only a few entries of their corresponding noise vectors $u_i$, so $\|\nabla_u f_i(x,u)\|_1/\|\nabla_u f_i(x,u)\|_2 = O(1)$. This situation is similar to several other instances in quantum computing where an advantage is observed on sparse inputs \citep{berry2007efficient,harrow2009quantum,berry2014exponential,prakash2014quantum,gilyen2019quantum,Doriguello2025quantumalgorithms}.
\end{itemize}
A problem can exhibit sparsity both in the number of highly sensitive constraint functions and in the number of highly sensitive entries in the noise vectors $u_i$, in which case a quadratic advantage in both $m$ and $d$ is possible as mentioned above. On the other hand, for dense instances, there is no quantum advantage using our hybrid quantum-classical algorithm. By a counting argument, dense instances are much more common, but not necessarily more interesting, as evident in works studying quantum problems with sparse inputs \citep{chen2024low,luo2025space,bellante2025quantum}.

As an application, we show that our quantum meta-algorithm can achieve at most a quadratic improvement in $d$ when solving robust linear programs and robust semidefinite programs. In particular, we consider the case where the uncertainty set takes the form of an ellipsoid~\citep{ben1998robust,ben1999robust,Ben-Tal2015a}. We show that the speedup is dependent on the $\ell_1$, $\ell_\infty$, and Frobenius norm of the matrix that controls uthe shape of the ellipsoid. In addition, we give explicit examples of robust linear programs and robust semidefinite programs in finance and engineering. In finance, the Global Maximum Return Portfolio (GMRP) problem aims to maximize the return of an investment without considering the variance of the market. The GMRP is generalized to a robust linear program since the returns of the assets are not known in practice and therefore are assumed to belong to an uncertainty set. In engineering, the Truss Topological Design (TTD) problem aims to optimize the choice of geometry, topology, and size of a truss structure in order to withstand external loads. This problem is expressed mathematically as a robust semidefinite program since information on the external loads is not perfectly known and hence is assumed to belong to an uncertainty set. Our quantum meta-algorithm solves these problems with a total runtime that is dependent on the Frobenius, $\ell_2$, and $\ell_\infty$-norms of the matrices that control the shape of the ellipsoid.

\subsection{Related work}

Online convex optimization seeks a solution from a convex set that optimizes a convex objective function in an online manner~\citep{shalev2012online, Hazan2016b}. The commonly used convex optimization techniques include variants of (sub)gradient descent, mirror descent, coordinate descent~\citep{zinkevich2003online, Hazan2016b, Flaxman2004, Hazan2007, Wang2014}, the multiplicative weight update method by~\cite{Arora2016a}, and Newton's method~\citep{Schraudolph2007}. Applications of these techniques can be found in~\cite{Arora2016, Helmbold2009, Helmbold1998}. Online convex optimization has wide applications in various fields such as telecommunication~\citep{belmega2022online}, aviation~\citep{patel2011trajectory, liu2017survey}, and finance~\citep{li2014online, kim2016online}, 

\cite{Freund1997} showed that the multiplicative weight update method can be applied in the general decision-theoretic setting of online learning, and thus in a more general class of learning problems. The same technique was also used to develop a boosting algorithm that does not require the performance of the weak learning algorithm to be known \textit{a priori}. The goal of such boosting algorithms is to convert a weak and inaccurate learner into a stronger and more accurate one. \cite{Ben-Tal2015a} provided two online meta-algorithms for solving a robust optimization problem: the first algorithm, which is the main algorithm of this work, approximately solves the problem using the dual subgradient method, assuming that noise vectors belong to a convex uncertainty set. Their second algorithm removes the convexity assumption on the uncertainty set but assumes some linearity properties on the constraint functions, and uses a “pessimization oracle” to find the worst-case noise assignment. The survey by \cite{bertsimas2011theory} compiled the results of robust linear, quadratic, semidefinite, and discrete optimization on different uncertainty sets, such as ellipsoidal, polyhedral, cardinality constraint, and norm uncertainties.

In the quantum setting, the runtime of a number of classical online optimization algorithms have been improved quadratically. The quantum AdaBoost algorithm by 
\cite{arunachalam2020quantum} achieves a quadratic speedup in the Vapnik-Chervonenkis (VC) dimension of the weak learner's hypothesis class as compared to the classical AdaBoost by \cite{Freund1997}. The quantum AdaBoost is capable of achieving such an objective in a short time using quantum techniques such as amplitude amplification and estimation~\citep{Brassard2002} and quantum mean estimation~\citep{brassard2011optimal, nayak1999quantum}. Subsequently, \cite{izdebski2020improved} provided a substantially faster and simpler quantum boosting algorithm based on the SmoothBoost algorithm by \cite{servedio2003smooth}. In the context of zero-sum games, \cite{van2019quantum} showed how to efficiently sample from the Gibbs distribution, leading to a quadratic speedup compared to its classical counterpart~\citep{Grigoriadis1995} in terms of the dimension of the payoff matrix. Furthermore, for the task of classifying $n$ $d$-dimensional data points, \cite{li2019sublinear} proposed a sublinear time quantum algorithm quadratically better than the classical algorithm by \cite{clarkson2012sublinear}. Quantum SDP solvers were developed using the Arora-Kale method~\citep{Brandao2017,VanApeldoorn2020}. 

\section{Preliminaries}
For $n\in\mathbb{N}$, let $[n]:=\{1, \dots, n\}$. If a vector is time dependent, we denote its time dependency as a superscript. Given a vector $v\in\mathbb{R}^n$ and $p\in[1,\infty]$, its $\ell_p$-norm is $\| v\|_p := (\sum_{i=1}^n \vert v_i\vert^p)^{1/p}$. Given a matrix $A\in\mathbb{R}^{n\times m}$ and $p\in[1,\infty]$, define $\|A\|_p := \sup_{x\in\mathbb{R}^m:\|x\|_p =1}\|Ax\|_p$. In the special cases $p=1,2,\infty$, one have $\|A\|_1 = \max_{j\in[m]}\sum_{i=1}^n |A_{ij}|$, $\|A\|_\infty = \max_{i\in[n]}\sum_{j=1}^m |A_{ij}|$, and $\|A\|_2 = \sigma_{\max}(A)$, where $\sigma_{\max}(A)$ is the maximum singular value of $A$. Let also $\|A\|_F := \sqrt{\sum_{i=1}^n\sum_{j=1}^m |A_{ij}|^2}$ be its Frobenious norm.  Given a convex function $f:\mathcal{D}\to\mathbb{R}$, its subgradient at $x$ is any vector $\nabla f(x)$ such that $f(y) \geq f(x) + \nabla f(x)^\top (y-x)$ for all $y\in\mathcal{D}$. Similarly for a concave function $f:\mathcal{D}\to\mathbb{R}$, its subgradient at $x$ is any vector $\nabla f(x)$ such that $f(y) \leq f(x) + \nabla f(x)^\top (y-x)$ for all $y\in\mathcal{D}$. We use $\bar{0}$ to denote the all zeros vector and $\ket{\bar{0}}$ to denote the state $\ket{0}\otimes\cdots\otimes\ket{0}$, where the number of qubits is clear from the context. We use $\widetilde O(\cdot)$ to hide the polylog factor, i.e., $\widetilde O(f(n)) = O(f(n)\cdot \poly\log(f(n)))$.

\subsection{The robust convex optimization framework}

Consider a convex optimization problem with an additional parameter: 
\begin{eqnarray}
\nonumber\text{minimize} & & f_0(x)\\
\nonumber\text{subject to} & & f_i(x, u_i)\leq 0, \qquad i\in[m],\\
\nonumber& & x\in\mathcal{D},
\end{eqnarray}
where $\mathcal{D}\subseteq\mathbb{R}^n$ is a convex set, $f_0,f_1,\dots,f_m$ are convex functions on the parameter $x\in\mathcal{D}$, and $u_1,\dots,u_m\in\mathbb{R}^d$ are fixed parameters. The robust formulation of the above convex optimization problem is
\begin{align}
    \label{robust_convex_opt}
    \begin{aligned}
    \text{minimize} & \quad f_0(x)\\
    \text{subject to} & \quad f_i(x, u_i)\leq 0, \qquad\forall  u_i\in\mathcal{U}, \quad i\in[m], \\
     & \quad x\in\mathcal{D},
    \end{aligned}
\end{align}
where the noise vectors $u_1,\dots,u_m$ now belong to an uncertainty convex set $\mathcal{U}\subseteq\mathbb{R}^d$ and we further assume that $f_1,\dots,f_m$ are concave on the noise vectors.  For convenience we summarize our assumptions for the rest of the work. For all $i\in[m]$,
\begin{enumerate}
    \item $\forall u\in\mathcal{U}$, $f_i(x,u)$ is convex in $x$;
    \item $\forall x\in\mathcal{D}$, $f_i(x,u)$ is concave in $u$;
    \item $\mathcal{D}\subseteq\mathbb{R}^n$ and $\mathcal{U}\subseteq\mathbb{R}^d$ are convex;
    \item $D \geq \max_{u,v\in\mathcal{U}}\| u-v\|_2$ is a bound on the $\ell_2$-diameter of $\mathcal{U}$;
    \item $F \geq \max_{x\in\mathcal{D}, u\in\mathcal{U}}|f_i(x, u)|$ is a bound on the maximum value of the constraint functions;
    \item $G_2 \geq \max_{x\in\mathcal{D}, u\in\mathcal{U}}\|\nabla_u f_i(x,u)\|_2$ is a bound on the $\ell_2$-norm of the subgradients of the constraint functions;
    \item $G_1 \geq \sum_{k=1}^m \|\nabla_u f_k(x,u_k)\|_1$ for all $x\in\mathcal{D}$ and $u_1,\dots,u_m\in\mathcal{U}$;
    \item $G_\infty \geq \max_{k\in[m]}\|\nabla_u f_k(x,u_k)\|_1$ for all $x\in\mathcal{D}$ and $u_1,\dots,u_m\in\mathcal{U}$.
\end{enumerate}

It is known that one can turn an optimization problem into a feasibility problem by using binary search over its optimal values~\citep{arora2012multiplicative}. For simplicity, we shift the objective function by $z$, the current guess of the optimal value,  and let the first constraint be $f_0(x, u)\leq 0$. Hence, a robust optimization problem can be reduced to the following feasibility problem:
\begin{align}
    \label{feasibility_problem}
    \exists ? x\in\mathcal{D}:\quad f_i(x, u_i)\leq 0, \quad \forall u_i\in\mathcal{U}, \quad i\in[m].
\end{align}
We say that $x\in\mathcal{D}$ is an $\epsilon$-approximate solution to the above problem if $f_i(x,u_i) \leq \epsilon$ for all $u_1,\dots,u_m\in\mathcal{U}$ and $i\in[m]$.

\subsection{The computational model and oracles}\label{computational_model}

We use the quantum circuit model~\citep{nielsen2010quantum} whereby an application of a quantum gate is equivalent to performing an elementary operation. In both classical and quantum settings,  we assume an arithmetic model that ignores errors from fixed-point representation and each elementary operation takes constant time. The query complexity of a classical/quantum algorithm with some input length is the maximum number of queries the algorithm makes on any input. The time complexity of a classical/quantum computation is in terms of the number of basic gates performed. Given states $\ket{x_1}, \dots, \ket{x_n}$, where $x_i\in[0, 1]$ for $i\in[n]$, we assume that we can do, in $O(1)$ time, a controlled-rotation on a superposition of  $\ket{x_1}, \dots, \ket{x_n}$:
\begin{align*}
    \ket{x_i}\ket{0}\mapsto \ket{x_i}\left(\sqrt{ x_i}\ket{0}+\sqrt{1- x_i}\ket{1}\right), \quad \forall i\in[n].
\end{align*}

We assume access, both classical and quantum, to a \emph{noise memory} that stores and allows updates on vectors (which shall commonly be the noise vectors $u_1,\dots,u_m\in\mathcal{U}$).
\begin{definition}[Noise memory]\label{noise_memory}
Let $u_1,\dots,u_m\in\mathbb{R}^d$ be a sequence of $m$ vectors that is stored in a classical memory of size $O(md)$. We say that we have access to the noise memory if we can query and update a single entry of any vector in $O(1)$ time. In the quantum setting, we say that we have quantum access to the noise memory if there exists a unitary operator $U$ such that queries can be made in superposition in $O(1)$ time, i.e.,
\begin{align*}
    U: \ket{i}\ket{j}\ket{\bar{0}}\rightarrow 
\ket{i}\ket{j}\ket{(u_i)_{j}}, \quad \forall i\in[m],j\in[d],
\end{align*}
and updates take $O(1)$ time, as in the classical memory. 
The third  register is assumed to contain sufficiently many qubits to allow accurate computation of all subsequent computations. 
\end{definition}

Our abstract model of noise memory model can be broken down into more standard models, e.g., the classical noise memory can be identified as a random-access memory (RAM), whose entries can be queried in constant time. The quantum noise memory, on the other hand, can be more commonly described as a quantum-readable classical-writable classical memory (QRAM)~\citep{giovannetti2008architectures,giovannetti2008quantum,allcock2023constant}, which is accessed in superposition by the quantum computer. Even though the $O(1)$ access time might be too strong of an assumption compared to traditional RAM or QRAM models wherein a query requires $O(\poly\log{md})$ time, such an overhead would only be polylogarithmic and so we opt to omit it for the sake of clarity. We mention that most QRAM circuits have T-gate count of $O(md)$ and T-gate depth of $O(\log{md})$~\citep{giovannetti2008architectures,giovannetti2008quantum,Arunachalam2015robustness,babbush2018encoding}, with new proposals achieving a T-gate count of $O(\sqrt{md})$ and a T-gate depth of $O(\log\log(md))$~\cite{Mukhopadhyay2025quantum}.

As mentioned in the introduction and following \cite{Ben-Tal2015a}, we assume access to a subgradient oracle to access an entry of the subgradient of the constraint functions, a projection oracle to update a noise vector in the noise memory by projecting it back onto the set $\mathcal{U}$, and an optimization oracle to solve the original non-robust convex optimization problem. 
\begin{definition}[Subgradient oracle $\mathcal O_\nabla$/$\mathcal Q_\nabla$]\label{subgradient_oracle}
Let $f_i:\mathcal{D}\times\mathcal{U}\rightarrow\mathbb{R}$ be a function for $i\in[m]$. Assume the noise vectors $u_1, \dots, u_m\in\mathcal{U}$ are stored in the noise memory. We say that we have classical access to a subgradient oracle $\mathcal{O}_\nabla$ if there is an operation that, given $i\in[m], j\in[d]$, and $x\in\mathcal{D}$, returns the subgradient $\left(\nabla_u f_i(x, u_i)\right)_j$. We say we have quantum access to a subgradient oracle $\mathcal{Q}_\nabla$ if there is a unitary $U_{\nabla_u f}$ such that, given $x\in\mathcal{D}$, 
\begin{align*}
    U_{\nabla_u f} : |i\rangle|j\rangle|\bar{0}\rangle \mapsto |i\rangle|j\rangle\ket{ (\nabla_u f_i(x,u_i))_j}, \quad \forall i\in[m],j\in[d].
\end{align*}
%
The time complexity of the classical/quantum oracle is denoted by $\mathcal{T}_\nabla$.
\end{definition}
\begin{definition}[Projection oracle $\mathcal{O}_\mathcal{P}$]
Assume the vectors $u_1,\dots,u_m\in\mathbb{R}^d$ are stored in the noise memory. We say that we have access to a projection onto $\mathcal{U}\subseteq\mathbb{R}^d$ if we have access to an operation $\mathcal{P}_{\mathcal{U}}$ in time $\mathcal{T}_{\mathcal P}$ that updates $u_i$ to $\mathcal{P}_{\mathcal{U}}(u_i)$, $i\in[m]$, where
\begin{align*}
    \mathcal{P}_{\mathcal{U}}(u_i) = \argmin_{q\in\mathcal{U}}\left\| q - u_i\right\|_2.
\end{align*}
\end{definition}
\begin{definition}[Optimization oracle $\mathcal{O}_\epsilon$]\label{classical_opt_oracle}
Let $f_i:\mathcal{D}\times\mathcal{U}\rightarrow \mathbb{R}$ be a function for $i\in[m]$. Furthermore, let the noise vectors $u_1,\dots,u_m\in\mathcal{U}$ be stored in the noise memory. The optimization oracle $\mathcal{O}_\epsilon$ either returns a vector $x\in\mathcal{D}$ such that, for all $i\in[m]$, 
 \begin{align*}
    f_i(x, u_i)\leq\epsilon,
 \end{align*}
or outputs {\rm \texttt{INFEASIBLE}} if there does not exist a vector $x\in\mathcal{D}$ such that, for all $i\in[m]$,
\begin{align*}
    f_i(x, u_i)\leq 0.
\end{align*}
The time complexity of the oracle is denoted by $\mathcal{T}_\epsilon$.
\end{definition}
We note that the projection operator $\mathcal P_{\mathcal U}$ operates the same on the noise memory regardless of classical or quantum access to it.

\subsection{Online stochastic subgradient descent}

One of the simplest, and yet most versatile algorithms for online convex optimization is the online (sub)gradient descent introduced by \cite{zinkevich2003online}. At each iteration the algorithm takes a step from the current point in the direction of the gradient of the function from the previous iteration. Since the step can take the point out of the underlying convex set, it must be projected back to the convex set. Here we shall use its stochastic subgradient variant\footnote{We shall focus on the problem of online maximization of \emph{concave} functions rather than online minimization of convex functions. Both formulations are equivalent.}, where we are given access to a random vector whose expectation is the subgradient at a given point (up to a factor close to $1$). More specifically, for a fixed $\nu\in[0,1/2]$ and given concave (loss) functions $h^{(t)}:\mathcal{U}\subseteq\mathbb{R}^d\rightarrow\mathbb{R}$, $t\in[T]$, assume we are given access to noisy subgradient oracles $ {\mathcal{O}}_g^{(t)}$ which for all $u\in \mathcal U$ output $g^{(t)}_u\in\mathbb{R}^d$ such that
\begin{align}
    \label{eq:noisy_gradient}
    \mathbb{E}[g^{(t)}_u ] = \lambda^{(t)} \nabla h^{(t)}(u) \quad \text{where}~ |\lambda^{(t)} - 1| \leq \nu, \quad\text{and}\quad \mathbb{E}[\|g_u^{(t)} \|^2_2]\leq \widetilde G_2^2. 
\end{align}
We stress that the parameters $\lambda^{(t)}$ do not need to be known. The slightly modified online stochastic subgradient descent algorithm is described in \Cref{online_stochastic_gradient_descent}. 

We now prove a regret guarantee for \Cref{online_stochastic_gradient_descent}. The functions $h^{(t)}$ will play the role as loss functions. The proof is inspired by~\citep[Theorems~3.1 \&~3.4]{Hazan2016b}.
\begin{algorithm}
\caption{Online stochastic subgradient descent}
\label{online_stochastic_gradient_descent}
\begin{algorithmic}[1]
    \Require Convex set $\mathcal{U}$, $u_1\in\mathcal{U}$, parameters $D,\widetilde G_2$ and steps sizes $\eta^{(t)} = \frac{D}{(1-\nu)\widetilde G_2\sqrt{t}}$;
    \For {$t=1$ to $T$} 
    \State Let $g^{(t)} = \mathcal{O}_g^{(t)}(u^{(t)})$;
    \State $u^{(t+1)}\gets\mathcal{P}_{\mathcal{U}}(u^{(t)} + \eta^{(t)} g^{(t)})$;
    \EndFor
\end{algorithmic}
\end{algorithm}
\begin{lemma}
    \label{lem:regret_stochastic_gradient}
    Let $\nu\in[0, 1/2]$. Online stochastic subgradient descent from {\rm \Cref{online_stochastic_gradient_descent}} with noisy subgradient $\mathcal{O}_g^{(t)}$ satisfying {\rm \Cref{eq:noisy_gradient}} and with step size $\eta^{(t)} = \frac{D}{(1-\nu)\widetilde G_2\sqrt{t}}$ guarantees that
    \begin{align*}
        \max_{u^\ast\in\mathcal{U}}\frac{1}{T} \sum_{t=1}^T h^{(t)} (u^\ast) - \frac{1}{T}\sum_{t=1}^T \mathbb{E}[h^{(t)}(u^{(t)})] \leq \frac{3(1+4\nu)\widetilde G_2D}{2\sqrt{T}}.
    \end{align*}
\end{lemma}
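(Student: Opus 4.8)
The plan is to adapt the standard potential-function analysis of online (sub)gradient descent (as in~\cite[Theorems~3.1 and~3.4]{Hazan2016b}) to two new features: the gradient estimate is stochastic, and its expectation equals the true subgradient only up to the multiplicative factor $\lambda^{(t)}$. Fix the comparator $u^\ast \in \arg\max_{u\in\mathcal{U}}\sum_{t=1}^T h^{(t)}(u)$, so that the left-hand side of the lemma is exactly $\frac1T\sum_{t=1}^T \mathbb{E}[h^{(t)}(u^\ast) - h^{(t)}(u^{(t)})]$. First I would use concavity of $h^{(t)}$ to pass from function values to a subgradient inner product: $h^{(t)}(u^\ast) - h^{(t)}(u^{(t)}) \le \nabla h^{(t)}(u^{(t)})^\top(u^\ast - u^{(t)})$. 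The key observation for handling the bias is that, conditioned on the history $\mathcal{F}_{t-1}$ (so that $u^{(t)}$ is already determined), $\mathbb{E}[g^{(t)} \mid \mathcal{F}_{t-1}] = \lambda^{(t)}\nabla h^{(t)}(u^{(t)})$; equivalently, $g^{(t)}$ is an unbiased estimator of the subgradient of the rescaled concave function $\lambda^{(t)} h^{(t)}$. Multiplying the concavity inequality by $\lambda^{(t)} > 0$ thus turns the left-hand side into $\mathbb{E}[g^{(t)} \mid \mathcal{F}_{t-1}]^\top(u^\ast - u^{(t)})$, which is exactly the quantity the descent step controls.

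Next I would run the distance-potential argument with $R_t := \|u^{(t)} - u^\ast\|_2^2$. Since $u^\ast\in\mathcal{U}$ and $\mathcal{P}_{\mathcal{U}}$ is a non-expansive projection, expanding $\|u^{(t+1)} - u^\ast\|_2^2 \le \|u^{(t)} + \eta^{(t)} g^{(t)} - u^\ast\|_2^2$ and rearranging gives, for each $t$,
\[
g^{(t)\top}(u^\ast - u^{(t)}) \le \frac{R_t - \|u^{(t+1)} - u^\ast\|_2^2}{2\eta^{(t)}} + \frac{\eta^{(t)}}{2}\,\|g^{(t)}\|_2^2 .
\]
Taking the conditional expectation, substituting the identity above for the left-hand side and the second-moment bound $\mathbb{E}[\|g^{(t)}\|_2^2]\le\widetilde{G}_2^2$ for the last term, and then taking total expectations yields
\[
\lambda^{(t)}\,\mathbb{E}\big[h^{(t)}(u^\ast) - h^{(t)}(u^{(t)})\big] \le \frac{\mathbb{E}[R_t] - \mathbb{E}[R_{t+1}]}{2\eta^{(t)}} + \frac{\eta^{(t)}}{2}\,\widetilde{G}_2^2 .
\]

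I would then sum over $t\in[T]$. The first term telescopes: since $\eta^{(t)}$ is decreasing the coefficients $1/\eta^{(t)}$ increase, so bounding $\mathbb{E}[R_t]\le D^2$ gives $\sum_t \frac{\mathbb{E}[R_t]-\mathbb{E}[R_{t+1}]}{2\eta^{(t)}} \le \frac{D^2}{2\eta^{(T)}} = \frac{(1-\nu)\widetilde{G}_2 D\sqrt T}{2}$. The second term is controlled via $\sum_{t=1}^T \eta^{(t)} = \frac{D}{(1-\nu)\widetilde{G}_2}\sum_{t=1}^T t^{-1/2} \le \frac{2D\sqrt T}{(1-\nu)\widetilde{G}_2}$, contributing $\frac{\widetilde{G}_2 D\sqrt T}{1-\nu}$. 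To strip the bias factor off the left I would invoke $\lambda^{(t)}\ge 1-\nu$, which is precisely why the step size carries the compensating $(1-\nu)^{-1}$: dividing the summed bound by $1-\nu$ gives $\frac1T\sum_t\mathbb{E}[h^{(t)}(u^\ast)-h^{(t)}(u^{(t)})] \le \frac{\widetilde{G}_2 D}{\sqrt T}\big(\tfrac12 + \tfrac{1}{(1-\nu)^2}\big)$. Finally, the elementary estimate $\frac{1}{1-\nu}\le 1+2\nu$ (valid for $\nu\in[0,1/2]$) gives $\frac{1}{(1-\nu)^2}\le(1+2\nu)^2\le 1+6\nu$, so the bracket is at most $\tfrac32 + 6\nu = \tfrac{3(1+4\nu)}{2}$, matching the claimed constant exactly.

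The routine parts are the projection inequality, the telescoping, and the $\sum_{t} t^{-1/2}\le 2\sqrt T$ estimate. The part needing genuine care — and the main obstacle — is the handling of the unknown bias $\lambda^{(t)}$: one must ensure it enters only as a benign multiplicative distortion that the $(1-\nu)$-scaled step size absorbs, so the rate stays $O(1/\sqrt T)$ and $\nu$ shows up solely in the constant $(1+4\nu)$. The delicate point is that the per-round expected regret $\mathbb{E}[h^{(t)}(u^\ast)-h^{(t)}(u^{(t)})]$ need not be sign-definite, so the passage from a bound on $\sum_t \lambda^{(t)}\mathbb{E}[\cdot]$ to one on $\sum_t\mathbb{E}[\cdot]$ via $\lambda^{(t)}\ge 1-\nu$ should be argued at the level where the controlled quantity is nonnegative (through the subgradient inner products) rather than applied termwise to signed regrets; carefully tracking the constants through this step is what produces the precise factor $\tfrac{3}{2}(1+4\nu)$.
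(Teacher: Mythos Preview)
The proposal is correct and follows essentially the same approach as the paper—concavity, projection nonexpansiveness, telescoping the squared-distance potential, and the elementary estimates $\sum_{t\le T} t^{-1/2}\le 2\sqrt T$ and $\tfrac{1}{(1-\nu)^2}\le 1+6\nu$—with only a cosmetic difference in where the bias factor is absorbed: you multiply the concavity inequality by $\lambda^{(t)}$ and strip it at the end, whereas the paper divides the per-step potential inequality by $\lambda^{(t)}$ and carries $1/(\lambda^{(t)}\eta^{(t)})$ through the Abel summation. Your explicit caveat about the sign issue when removing $\lambda^{(t)}$ is well taken and is exactly why the paper keeps the $1/\lambda^{(t)}$ inside the telescoping rather than factoring it out afterward.
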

\begin{proof}
    Let $u^\ast \in \arg\max_{u\in\mathcal{U}} \sum_{t=1}^T h^{(t)}(u)$. Observe that, by concavity,
    \begin{align*}
        h^{(t)}(u^\ast) - h^{(t)}(u^{(t)}) \leq \nabla h^{(t)}(u^{(t)})\cdot (u^\ast - u^{(t)}).
    \end{align*}
    On the other hand,
    \begin{align*}
        \|u^\ast - u^{(t+1)}\|^2_2 &= \|u^\ast - \mathcal{P}_{\mathcal{D}}(u^{(t)} + \eta^{(t)}g^{(t)})\|^2_2 \\
        &\leq \|u^\ast - u^{(t)} - \eta^{(t)}g^{(t)}\|^2_2 \\
        &= \|u^\ast - u^{(t)}\|_2^2 + \eta_t^2\|g^{(t)}\|_2^2 - 2\eta^{(t)}g^{(t)}\cdot(u^\ast - u^{(t)}),
    \end{align*}
    and therefore
    \begin{align*}
        2g^{(t)}\cdot(u^\ast - u^{(t)}) \leq \frac{\|u^\ast - u^{(t)}\|^2_2 - \|u^\ast - u^{(t+1)}\|_2^2}{\eta^{(t)}} + \eta^{(t)} \|g^{(t)}\|_2^2.
    \end{align*}
    By noticing that $\lambda^{(t)}\mathbb{E}[\nabla h^{(t)}(u^{(t)})\cdot(u^\ast - u^{(t)})] = \mathbb{E}[g^{(t)}\cdot(u^\ast - u^{(t)})]$, and setting $\eta^{(t)} = \frac{D}{(1-\nu)\widetilde{G}_2\sqrt{t}}$ (with $1/\eta^{(0)} := 0$), we obtain
    \begin{align*}
        2\mathbb{E}\left[\sum_{t=1}^T h^{(t)}(u^\ast) - h^{(t)}(u^{(t)}) \right]
        &\leq 2\mathbb{E}\left[\sum_{t=1}^T \nabla h^{(t)}(u^{(t)})\cdot (u^\ast - u^{(t)}) \right]\\
        &= 2\mathbb{E}\left[\sum_{t=1}^T \frac{1}{\lambda^{(t)}} g^{(t)}\cdot (u^\ast - u^{(t)}) \right] \\
        &\leq \mathbb{E}\left[\sum_{t=1}^T \frac{\|u^\ast - u^{(t)}\|^2_2 - \|u^\ast - u^{(t+1)}\|_2^2}{\lambda^{(t)}\eta^{(t)}} \right] + \widetilde G_2^2\sum_{t=1}^T \frac{\eta^{(t)}}{\lambda^{(t)}}\\
        &\leq \mathbb{E}\left[\sum_{t=1}^T \frac{\|u^\ast - u^{(t)}\|^2_2}{\lambda^{(t)}\eta^{(t)}} - \frac{\|u^\ast - u^{(t)}\|^2_2}{\lambda^{(t-1)}\eta^{(t-1)}}\right] + \widetilde G_2^2\sum_{t=1}^T \frac{\eta^{(t)}}{\lambda^{(t)}}\\
        &\leq D^2\sum_{t=1}^T \left(\frac{1}{\lambda^{(t)}\eta^{(t)}} - \frac{1}{\lambda^{(t-1)}\eta^{(t-1)}}\right) + \widetilde G_2^2\sum_{t=1}^T \frac{\eta^{(t)}}{\lambda^{(t)}}\\
        &\leq \frac{D^2}{\lambda^{(T)}\eta^{(T)}} + \widetilde G_2^2\sum_{t=1}^T \frac{\eta^{(t)}}{\lambda^{(t)}}\\
        &\leq D\widetilde G_2\sqrt{T}\left(1 + \frac{2}{(1-\nu)^2} \right)\\
        &\leq (3+12\nu)D\widetilde G_2\sqrt{{T}},
    \end{align*}
    where the second last inequality follows from $\sum_{t=1}^t t^{-1/2} \leq 2\sqrt{T}$ and $\lambda^{(t)} \geq 1-\nu$, and the last one from $\frac{1}{(1-\nu)^2} \leq (1 + 2\nu)^2 \leq 1 + 6\nu$. 
\end{proof}
\begin{remark}
    In online stochastic gradient descent~{\rm \citep[Theorem~3.4]{Hazan2016b}}, it is usually assumed  access to an oracle that, for all $u\in\mathcal U$, returns $g_u^{(t)}\in\mathbb{R}^d$ such that
    \begin{align*}
        \mathbb{E}[g^{(t)}_u ] = \nabla h^{(t)}(u) \quad\text{and}\quad \mathbb{E}[\|g_u^{(t)} \|^2_2]\leq G^2. 
    \end{align*}
    {\rm \Cref{online_stochastic_gradient_descent}} with access to such an oracle achieves a regret upper-bounded by $\frac{3GD}{2\sqrt T}$. Notice that the bound derived in {\rm \Cref{lem:regret_stochastic_gradient}} has an extra factor of $1+4\nu$ due to the noisy subgradient in the first equality of {\rm \Cref{eq:noisy_gradient}}, where the $\lambda$ parameter is bounded away from $1$ by at most $\nu$. 
\end{remark}

\section{A stochastic dual-subgradient meta-algorithm}

In this section, we generalize the online oracle-based meta-algorithm of \cite{Ben-Tal2015a} used to solve \Cref{robust_convex_opt}. Instead of exactly computing the subgradients $\nabla_u f_i(x,u)$ as originally done by \cite{Ben-Tal2015a}, we now have access to a stochastic subgradient oracle $\mathcal{O}_g$ which, on input $(x,u_1,\dots,u_m)$, outputs $g_1,\dots,g_m\in\mathbb{R}^d$ that, for a given $\nu\in[0,1/2]$, satisfy
\begin{align}\label{eq:gradient_oracle}
    \forall i\in[m]: \quad \mathbb{E}[g_i] = \lambda\nabla_u f_i(x,u_i) \quad\text{where}~|\lambda - 1|\leq \nu, \quad\text{and}\quad \mathbb{E}[\|g_i\|_2^2] \leq \widetilde{G}_2^2.
\end{align}
We stress that $\lambda$ does not need to be fixed or known and can change at each call of the oracle $\mathcal{O}_g$.
The algorithm is shown in \Cref{algo}. The following theorem shows that \Cref{algo} solves the robust optimization problem in \Cref{feasibility_problem}.

\begin{algorithm}[h]
\caption{Online dual-subgradient robust optimization algorithm with subgradient oracle}
\label{algo}
\begin{algorithmic}[1]
    \Require Convex sets $\mathcal{D},\mathcal{U}$, target accuracy $\epsilon>0$, parameters $\nu,D,F,\widetilde{G}_2$, oracle $\mathcal O_g$;
    \State Set $T = \big\lceil\frac{1}{\epsilon^2} \max\big\{4F\log(\frac{m}{\delta}), \frac{9}{4}(1+4\nu)^2D^2 \widetilde{G}_2^2\big\} \big\rceil$ and $\eta^{(t)}=\frac{D}{(1-\nu)\widetilde{G}_2\sqrt{t+1}}$;
    \State Initialize $(u^{(0)}_1, \dots, u^{(0)}_m)\in\mathcal{U}^m$ arbitrarily in the noise memory and $x^{(0)}\in\mathcal{D}$;
    \For {$t=0$ to $T-1$}
    \State $g_{1}^{(t)},\dots,g_m^{(t)}\gets \mathcal O_g(x^{(t)},u^{(t)}_1,\dots,u^{(t)}_m)$;
    \For {$i=1$ to $m$}
    \State $u_i^{(t+1)}\gets\mathcal{P}_{\mathcal{U}}(u_i^{(t)} + \eta^{(t)} g_i^{(t)})$;
    \EndFor
    \State $x^{(t+1)} \gets\mathcal{O}_\epsilon(u_1^{(t+1)},\dots,u_m^{(t+1)})$;
    \If{oracle declares infeasibility}
    \Return \texttt{INFEASIBLE};
    \EndIf
    \EndFor
    \Ensure $\bar{x}=\frac{1}{T}\sum_{t=1}^T x^{(t)}$;
\end{algorithmic}
\end{algorithm}

\begin{theorem}\label{thr:correctness_general}
    For $i\in[m]$, let $f_i: \mathcal{D}\times \mathcal{U}\to \mathbb{R}$ be a convex function over variables from a convex set $\mathcal{D}\subseteq\mathbb{R}^n$ and parameters from a convex set $\mathcal{U}\subseteq\mathbb{R}^d$ with diameter $\max_{u,v\in\mathcal{U}}\| u-v\|_2\leq D$. Let $F\in\mathbb{R}_{>0}$ such that $|f_i(x,u)| \leq F$ for all $x\in\mathcal{D}$ and $u\in\mathcal{U}$. Assume access to a stochastic subgradient oracle $\mathcal{O}_g$ which, on input $(x,u_1,\dots,u_m)$, outputs $g_1,\dots,g_m\in\mathbb{R}^d$ that satisfy {\rm \Cref{eq:gradient_oracle}} for $\nu\in[0,1/2]$ and $\widetilde{G}_2\in\mathbb{R}_{>0}$. Let $T = \big\lceil\frac{1}{\epsilon^2} \max\big\{4F\log(\frac{m}{\delta}), \frac{9}{4}(1+4\nu)^2D^2 \widetilde{G}_2^2\big\} \big\rceil$. {\rm \Cref{algo}} either returns an $3\epsilon$-approximate solution to the robust program {\rm \Cref{feasibility_problem}} with probability at least $1-\delta$ or successfully asserts that it is infeasible.
\end{theorem}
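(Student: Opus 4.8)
The plan is to read the guarantee as a primal–dual statement and split the analysis into the two behaviours the algorithm can exhibit. The optimization oracle $\mathcal{O}_\epsilon$ plays the primal variable $x$, while for each constraint $i$ the noise vector $u_i$ is driven by its own online stochastic subgradient descent (Alg.~\ref{online_stochastic_gradient_descent}) acting as a dual player. First I would dispose of the infeasibility branch: if $\mathcal{O}_\epsilon$ ever declares \texttt{INFEASIBLE} at some round $t$, then there is no $x\in\mathcal{D}$ with $f_i(x,u_i^{(t)})\leq 0$ for the specific vectors $u_1^{(t)},\dots,u_m^{(t)}\in\mathcal{U}$; since the robust program quantifies over \emph{all} $u\in\mathcal{U}$, it is a fortiori infeasible, so the assertion is always correct. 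It then remains to analyze the complementary event, on which $\mathcal{O}_\epsilon$ returns, for every $t\in[T]$ and $i\in[m]$, a point $x^{(t)}$ with $f_i(x^{(t)},u_i^{(t)})\leq\epsilon$, and to show that $\bar{x}=\frac1T\sum_{t=1}^T x^{(t)}$ is a $3\epsilon$-approximate solution with probability at least $1-\delta$.

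For the approximation bound I would first use convexity of $f_i$ in $x$ together with Jensen's inequality to pass to the averaged iterates, $f_i(\bar{x},u)\leq \frac1T\sum_{t=1}^T f_i(x^{(t)},u)$, so that it suffices to prove $\max_{u\in\mathcal{U}}\frac1T\sum_{t=1}^T f_i(x^{(t)},u)\leq 3\epsilon$ for each $i$. Splitting off the learner's own loss, $\frac1T\sum_t f_i(x^{(t)},u)=\frac1T\sum_t f_i(x^{(t)},u_i^{(t)})+\frac1T\sum_t\big(f_i(x^{(t)},u)-f_i(x^{(t)},u_i^{(t)})\big)$, the first term is at most $\epsilon$ by the oracle, and the second is exactly the regret of Alg.~\ref{online_stochastic_gradient_descent} run on the concave maps $u\mapsto f_i(x^{(t)},u)$ with the stochastic subgradients $g_i^{(t)}$ delivered by $\mathcal{O}_g$, which satisfy Eq.~\eqref{eq:gradient_oracle}. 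Applying Lemma~\ref{lem:regret_stochastic_gradient} per constraint bounds the \emph{expected} regret by $\tfrac{3(1+4\nu)\widetilde{G}_2 D}{2\sqrt{T}}$, which is at most $\epsilon$ once $T\geq \tfrac94(1+4\nu)^2 D^2\widetilde{G}_2^2/\epsilon^2$ — the second term inside the maximum defining $T$.

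The remaining work is to upgrade this expectation bound to one holding for the realized run and uniformly over the worst-case $u$. Here I would keep the projection/telescoping step of Lemma~\ref{lem:regret_stochastic_gradient} in its pathwise form: since $\|u-u_i^{(t)}\|_2\leq D$ for every $u\in\mathcal{U}$, the bound $2\sum_t\tfrac{1}{\lambda^{(t)}}g_i^{(t)}\cdot(u-u_i^{(t)})\leq \tfrac{D^2}{\lambda^{(T)}\eta^{(T)}}+\sum_t\tfrac{\eta^{(t)}}{\lambda^{(t)}}\|g_i^{(t)}\|_2^2$ holds simultaneously for all $u$, so the maximum over the hindsight comparator collapses to a deterministic leading term together with fluctuations generated by the subgradient noise $g_i^{(t)}-\mathbb{E}[g_i^{(t)}\mid\mathcal{F}_{t-1}]$, whose conditional means vanish. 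Lemma~\ref{lem:regret_stochastic_gradient} already shows these fluctuations are small in expectation; to make the statement hold with probability $1-\delta$ I would concentrate them with a martingale (Azuma/Bernstein-type) inequality and union-bound over the $m$ constraints, which is where the $\log(m/\delta)$ factor enters. Because $\mathcal{O}_g$ controls only $\mathbb{E}[\|g_i\|_2^2]$ and not the pathwise size of $g_i$, I would drive this concentration through the bounded constraint values $|f_i|\leq F$; the choice $T\geq 4F\log(m/\delta)/\epsilon^2$ — the first term inside the maximum defining $T$ — then forces the extra slack below $\epsilon$. Combining the oracle's $\epsilon$, the expected regret's $\epsilon$, and this concentration $\epsilon$ yields the claimed $3\epsilon$.

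The main obstacle I anticipate is precisely this high-probability conversion, and it hinges on making two awkward features cooperate: the comparator $u$ is chosen in hindsight and is therefore correlated with the very randomness one wants to average out (handled by keeping the subgradient-descent bound uniform in $u$ rather than fixing $u$ before concentrating), and $\mathcal{O}_g$ guarantees only a second moment, so the concentration cannot be driven by $\widetilde{G}_2$ and must instead be routed through the boundedness $F$ of the constraint functions. By comparison, the index shift between the loop ranges of Alg.~\ref{online_stochastic_gradient_descent} and Alg.~\ref{algo}, the monotonicity of $\lambda^{(t)}\eta^{(t)}$ needed for the telescoping, and the verification that Eq.~\eqref{eq:gradient_oracle} meets the hypotheses of Lemma~\ref{lem:regret_stochastic_gradient} are routine bookkeeping.
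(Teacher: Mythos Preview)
Your proposal is correct and follows essentially the same route as the paper: the infeasibility branch, the oracle bound $f_i(x^{(t)},u_i^{(t)})\leq\epsilon$, Lemma~\ref{lem:regret_stochastic_gradient} for the expected regret, Azuma on the bounded martingale differences $f_i(x^{(t)},u_i^{(t)})-\mathbb{E}_t[f_i(x^{(t)},u_i^{(t)})]$ (bounded by $2F$), a union bound over $i\in[m]$, and convexity in $x$ to pass to $\bar{x}$. The paper simply applies Azuma directly to these function-value differences rather than first attempting to concentrate the subgradient noise $g_i^{(t)}-\mathbb{E}[g_i^{(t)}\mid\mathcal{F}_{t-1}]$, but that is precisely the redirect you describe when you note that only $\mathbb{E}[\|g_i\|_2^2]$ is controlled and the concentration must be routed through $|f_i|\leq F$.
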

\begin{proof}
    Suppose first that the algorithm returns \texttt{INFEASIBLE}. This means that for some $t\in[T]$ there is no $x\in\mathbb{R}^n$ such that
    \begin{align*}
        f_i(x, u^{(t)}_{i})\leq \epsilon, \quad i\in[m].
    \end{align*}
    This implies the robust counterpart \Cref{feasibility_problem} cannot be feasible, as there is an admissible perturbation that makes the original problem infeasible.
        
    Suppose now a solution $\bar{x} = \frac{1}{T}\sum_{i=1}^t x^{(t)}$ is obtained. This means that $f_i(x^{(t)}, u^{(t)}_{i})\leq \epsilon$ for all $t\in[T]$ and $i\in[m]$, thus
    \begin{align}
        \label{eq:bound2-1}
        \forall i\in[m]: \quad \frac{1}{T}\sum_{t=1}^T f_i(x^{(t)}, u^{(t)}_{i}) \leq \epsilon.
    \end{align}
     Moreover, for fixed $i\in[m]$, note that the random variables $Z^{(t)}_i := f_i(x^{(t)}, u_i^{(t)}) - \mathbb{E}_t\big[f_i(x^{(t)}, u_i^{(t)})\big]$ for $t\in[T]$ form a martingale differences sequence with respect to internal randomness of the oracle $\mathcal{O}_g$, where $\mathbb{E}_t[\cdot]$
     is conditioned on the randomness of previous steps $[t-1]$,
     and
    \begin{align*}
        \big\vert Z_i^{(t)}\big\vert \leq \big\vert f_i(x^{(t)}, u_i^{(t)})\big\vert + \mathbb{E}_t\big[\big\vert f_i(x^{(t)}, u_i^{(t)})\big\vert\big] \leq 2F.
    \end{align*}
    Therefore, by Azuma's inequality~\citep[Lemma~A.7]{cesa2006prediction},
    \begin{align}
        \label{eq:bound2-3}
        \operatorname{Pr}\left[\frac{1}{T}\sum_{t=1}^T\mathbb{E}_t\big[f_i(x^{(t)}, u_i^{(t)})\big] - \frac{1}{T}\sum_{t=1}^T f_i(x^{(t)}, u_i^{(t)}) \geq 2F\sqrt{\frac{\log(m/\delta)}{T}} \right] \leq \frac{\delta}{m}.
    \end{align}
    On the other hand, we can use the regret bound from \Cref{lem:regret_stochastic_gradient} with $h^{(t)}(u_i^{(t)}) = f_i(x^{(t)}, u_i^{(t)})$ for each $i\in[m]$. Then
    \begin{align}
        \label{eq:bound2-2}
        \forall i\in[m]: \quad \max_{u^\ast\in\mathcal{U}}\frac{1}{T}\sum_{t=1}^T f_i(x^{(t)}, u^{\ast}) - \frac{1}{T}\sum_{t=1}^T \mathbb{E}_t[f_i(x^{(t)}, u_i^{(t)})] \leq \frac{3(1+4\nu)D\widetilde{G}_2}{2\sqrt{T}}.
    \end{align}
    Combining \Cref{eq:bound2-1,eq:bound2-3,eq:bound2-2} leads to, for all $i\in[m]$,
    \begin{align}
        \epsilon \geq \frac{1}{T}\sum_{t=1}^T f_i(x^{(t)}, u^{(t)}_{i}) &\geq \frac{1}{T}\sum_{t=1}^T \mathbb{E}_t[f_i(x^{(t)}, u^{(t)}_{i})] - 2F\sqrt{\frac{\log(m/\delta)}{T}} \nonumber\\
        &\geq \max_{u^\ast\in\mathcal{U}}\frac{1}{T}\sum_{t=1}^T f_i(x^{(t)}, u^{\ast}) - 2F\sqrt{\frac{\log(m/\delta)}{T}} - \frac{3(1+4\nu)D\widetilde{G}_2}{2\sqrt{T}}\nonumber\\
        &\geq \max_{u^\ast\in\mathcal{U}}f_i(\bar{x}, u^{\ast}) - 2F\sqrt{\frac{\log(m/\delta)}{T}} - \frac{3(1+4\nu)D\widetilde{G}_2}{2\sqrt{T}}\label{eq:chain_bounds}
    \end{align}
    with probability at least $1-\delta$ by a union bound, where the final inequality comes from the convexity of $f_i$ with respect to $x$. Using $T = \big\lceil\frac{1}{\epsilon^2} \max\big\{4F\log(\frac{m}{\delta}), \frac{9}{4}(1+4\nu)^2 D^2 \widetilde{G}_2^2\big\} \big\rceil$, we arrive, with probability at least $1-\delta$, at
    \[
        f_i(\bar{x},u_i) \leq 3\epsilon, \quad\quad \forall u_i\in\mathcal{U}, \quad i\in[m]. \qedhere
   \]
\end{proof}
\begin{remark}\label{rem:remark1}
    For $i\in[m]$, suppose the function $f_i(x,u)$ is linear in $u$ for all $x\in\mathcal{D}$, i.e., $f_i(x,u) = \alpha_i(x)^\top u + \beta_i(x)$ for functions $\alpha_i:\mathbb{R}^n\to\mathbb{R}^d$ and $\beta_i:\mathbb{R}^n\to\mathbb{R}$. Then {\rm \Cref{eq:bound2-3,eq:bound2-2}} can be applied directly to $\alpha_i(x)^\top u$, in which case
    \begin{align*}
        \frac{1}{T}\sum_{t=1}^T \alpha_i(x^{(t)})^\top u^{(t)}_{i} &\geq \frac{1}{T}\sum_{t=1}^T \mathbb{E}_t[\alpha_i(x^{(t)})^\top u^{(t)}_{i}] - 2F\sqrt{\frac{\log(m/\delta)}{T}} \\
        &\geq \max_{u^\ast\in\mathcal{U}}\frac{1}{T}\sum_{t=1}^T \alpha_i(x^{(t)})^\top u^{\ast} - 2F\sqrt{\frac{\log(m/\delta)}{T}} - \frac{3(1+4\nu)D\widetilde{G}_2}{2\sqrt{T}}.
    \end{align*}
    {\rm \Cref{eq:chain_bounds}} can be obtained by summing $\frac{1}{T}\sum_{t=1}^T \beta_i(x^{(t)})$ to both sides of the equation above. The advantage of doing so is that $F$ becomes an upper bound on $|\alpha_i(x)^\top u|$ instead of $|f_i(x,u)|$.
\end{remark}

As previously mentioned, \cite{Ben-Tal2015a} originally used an oracle $\mathcal{O}_g$ that, on input $(x,u_1,\dots,u_m)$, outputs $\nabla_u f_1(x,u_1),\dots,\nabla_u f_m(x,u_m)$, i.e., the exact subgradients. This was done simply by calling the subgradient oracle $\mathcal{O}_\nabla$ from \Cref{subgradient_oracle} in order to compute each entry $(\nabla_u f_i(x,u_i))_j$, for $i\in[m],j\in[d]$. The overall complexity of their oracle $\mathcal{O}_g$ is then $md\mathcal{T}_\nabla$. In the next section, we show that a hybrid quantum-classical algorithm can reduce the overall complexity of $\mathcal{O}_g$ with regard to the subgradient oracle. 

\section{Hybrid quantum-classical online robust optimization}

In this section, we present our hybrid quantum-classical online algorithm, shown in \Cref{algorithm_q_sampling}, for the robust convex optimization problem in \Cref{feasibility_problem}. The main idea behind the algorithm is to build a stochastic subgradient oracle $\mathcal{O}_g$ by performing $\ell_1$-sampling from the $md$-dimensional vector $(\nabla_u f_1(x,u_1),\dots,\nabla_u f_m(x,u_m))$ and construct a stochastic subgradient using the sampled entries. For such, our algorithm makes use of the following quantum state preparation subroutine by \cite{hamoudi2022preparing}. We will also use a subroutine to estimate the $\ell_1$-norm of a vector due to \cite{Hamoudi2019}. The algorithm's correctness and runtime are proven in \Cref{correctness_runtime_q_sampling}. We shall assume quantum access to the noise memory as in \Cref{noise_memory} and the oracles $\mathcal{O}_{\nabla}$, $\mathcal{Q}_{\nabla}$, $\mathcal{O}_{\mathcal{P}}$, $\mathcal{O}_\epsilon$.
\begin{fact}[{\citep[Theorem~1]{hamoudi2022preparing}}]\label{quantum_multi_sample}
    Consider $s,n\in\mathbb{N}$ with $1\leq s\leq n$, $\delta\in(0,1)$, and a non-zero vector $u\in\mathbb{R}^n_{\geq 0}$. There is a quantum algorithm that outputs $s$ copies of the state $\sum_{i=1}^n \sqrt{\frac{u_i}{\|u\|_1}}|i\rangle$ with probability at least $1-\delta$ and in $O(\sqrt{sn}\log(1/\delta))$ time.
\end{fact}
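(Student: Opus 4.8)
The plan is to reduce the claim to a single-copy state-preparation primitive together with an amortized amplitude-amplification argument. First I would build the single-copy version. Assuming black-box access to the entries through $|i\rangle|0\rangle \mapsto |i\rangle|u_i\rangle$ together with the controlled-rotation assumption of Section~\ref{computational_model}, prepare the uniform superposition $\frac{1}{\sqrt{n}}\sum_{i=1}^n |i\rangle$, query the entries, and apply a controlled rotation to obtain $\frac{1}{\sqrt{n}}\sum_{i=1}^n |i\rangle\bigl(\sqrt{u_i/M}\,|0\rangle + \sqrt{1-u_i/M}\,|1\rangle\bigr)$ for $M = \max_i u_i$, which can be found in $O(\sqrt{n})$ time by quantum maximum finding. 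Conditioned on the flag qubit being $|0\rangle$, the index register is exactly the target state $\sum_{i=1}^n \sqrt{u_i/\|u\|_1}\,|i\rangle$, and the success probability is $\|u\|_1/(nM) \geq 1/n$, since $\|u\|_1 \geq M$. Hence $O(\sqrt{n})$ rounds of amplitude amplification yield one copy using $O(\sqrt{n})$ queries.

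The heart of the argument --- and the step I expect to be the main obstacle --- is to produce $s$ copies in $O(\sqrt{sn})$ time rather than the naive $O(s\sqrt{n})$ obtained by amplifying each copy separately. Amplifying the joint event that all $s$ independent trials succeed is hopeless, since its probability decays like $(1/n)^s$; the gain must instead come from amortizing the amplification across copies. The target bound is exactly the coherent analogue of the classical fact that collecting $s$ successes among Bernoulli trials of success probability $\Theta(1/n)$ costs $\Theta(sn)$ samples, whose square root is $\Theta(\sqrt{sn})$. I would realise this by setting up a coherent pool of $\Theta(sn)$ trials (each the single-copy procedure stopped before amplification, so succeeding with probability $\Theta(1/n)$) and coherently compacting the successful branches into the $s$ output registers, using amplitude amplification and approximate counting to drive and verify the compaction. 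Making this compaction unitary and certifying that the surviving branches reproduce the exact tensor power $\bigl(\sum_{i=1}^n \sqrt{u_i/\|u\|_1}\,|i\rangle\bigr)^{\otimes s}$ is the delicate part of the construction.

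Finally, I would settle exactness and the failure probability. Plain amplitude amplification returns the target only approximately unless the rotation angle is tuned, so I would use exact amplitude amplification --- adjusting a single ancilla rotation so that the success amplitude becomes a convenient value --- to guarantee that, conditioned on success, each register holds the exact state $\sum_{i=1}^n \sqrt{u_i/\|u\|_1}\,|i\rangle$. The whole procedure succeeds with constant probability; standard boosting by independent repetition with a success check then raises this to $1-\delta$ at the cost of a multiplicative $\log(1/\delta)$ factor, giving the claimed $O(\sqrt{sn}\log(1/\delta))$ running time.
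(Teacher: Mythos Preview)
The paper does not prove this statement: it is quoted as a Fact from \cite{hamoudi2022preparing} and invoked as a black box, so there is no in-paper argument to compare your sketch against.

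On the merits of the sketch itself, the single-copy construction and the final $\log(1/\delta)$ boosting are standard and correct. The genuine gap is in your amortization step. A ``coherent pool of $\Theta(sn)$ trials,'' each being one invocation of the single-copy circuit before amplification, already costs $\Theta(sn)$ oracle calls just to instantiate --- one query per trial. Amplitude amplification applied on top of that pool (with base unitary ``create pool and compact'') cannot bring the total below the cost of the base unitary, so you never reach $O(\sqrt{sn})$. Shrinking the pool does not help either: with $N$ trials the expected number of successes is $Np \leq N/n$, so obtaining $s$ successes with non-negligible amplitude forces $N = \Omega(sn)$, and you are back to the same barrier. The $\sqrt{sn}$ bound in Hamoudi's construction does not come from pooling many weak trials and compacting; it comes from a genuinely different amortization of the Grover iterations across the $s$ output registers. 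You have correctly located where the difficulty lies, but the mechanism you propose does not close it.
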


\begin{fact}[{\citep[Lemma~3.6]{Hamoudi2019}}]\label{quantum_l1_norm}
    Consider $\delta,\nu\in(0,1)$ and a non-zero vector $u\in\mathbb{R}^n$ such that $\|u\|_\infty \leq M$ for some $M\in\mathbb{R}_{>0}$. Given an evaluation oracle to $u$ in time $\mathcal{T}_u$, there is a quantum algorithm that outputs $\Gamma\in\mathbb{R}_{>0}$ such that $|\Gamma - \|u\|_1| \leq \nu\|u\|_1$ with probability $1-\delta$ and in $O(\mathcal{T}_u\nu^{-1}\sqrt{nM/\|u\|_1}\log(1/\delta))$ time.
\end{fact}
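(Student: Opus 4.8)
The plan is to reduce $\ell_1$-norm estimation to quantum amplitude estimation by encoding the normalized magnitudes $|u_i|/M$ as a success amplitude, and then to convert the additive guarantee of amplitude estimation into the stated multiplicative one. First I would construct a unitary $A$ that, starting from $\ket{\bar 0}$, prepares the uniform superposition $\frac{1}{\sqrt n}\sum_{i=1}^n\ket{i}$ in $O(\log n)$ time, then queries the evaluation oracle (cost $\mathcal{T}_u$) to compute $|u_i|/M$ into an ancilla register. This is precisely where the hypothesis $\|u\|_\infty\le M$ enters: it guarantees $|u_i|/M\in[0,1]$, so the assumed $O(1)$-time controlled rotation applies with $x_i=|u_i|/M$, yielding
\[
A\ket{\bar 0} = \frac{1}{\sqrt n}\sum_{i=1}^n \ket{i}\Bigl(\sqrt{|u_i|/M}\,\ket{0} + \sqrt{1-|u_i|/M}\,\ket{1}\Bigr).
\]
Declaring the flag qubit outcome $\ket{0}$ to be ``good'' gives success probability
\[
p = \frac{1}{n}\sum_{i=1}^n \frac{|u_i|}{M} = \frac{\|u\|_1}{nM},
\]
so that a multiplicative-$\nu$ estimate $\tilde p$ of $p$ immediately produces the desired estimate $\Gamma = nM\tilde p$ of $\|u\|_1$ with $|\Gamma - \|u\|_1|\le\nu\|u\|_1$.

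Second I would run amplitude estimation~\cite{Brassard2002} on $A$. With $N$ applications of $A$ and $A^{-1}$ it returns $\tilde p$ satisfying $|\tilde p - p|\le 2\pi\sqrt{p(1-p)}/N + \pi^2/N^2$ with constant probability. To upgrade this additive bound to the multiplicative bound $|\tilde p - p|\le\nu p$, note that in the relevant regime $p\le 1/2$ one has $\sqrt{p(1-p)}=O(\sqrt p)$, so choosing $N=\Theta(\nu^{-1}/\sqrt p)=\Theta(\nu^{-1}\sqrt{nM/\|u\|_1})$ makes both error terms at most $\nu p$. Since one application of $A$ costs $O(\mathcal{T}_u)$ (the oracle call dominating the $O(\log n)$ preparation and the $O(1)$ rotation), this yields a running time of $O(\mathcal{T}_u\,\nu^{-1}\sqrt{nM/\|u\|_1})$ before amplification, matching the claimed bound.

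The main obstacle is the circularity in the choice of $N$: the optimal number of amplitude-estimation rounds depends on $p=\|u\|_1/(nM)$, the very quantity being estimated, while the algorithm is only given $M$ and the evaluation oracle. I would resolve this with a standard exponential/doubling search, running amplitude estimation for geometrically increasing $N=1,2,4,\dots$ and stopping at the first $N$ whose returned estimate $\tilde p$ is consistent with the resolution $\approx\sqrt{\tilde p}/N$ needed to certify multiplicative accuracy $\nu$; the total query count across rounds is geometric and hence dominated by the final round $N=\Theta(\nu^{-1}\sqrt{nM/\|u\|_1})$, so the asymptotics are unchanged. Finally, to boost the constant success probability of amplitude estimation (and of the stopping test) to $1-\delta$, I would repeat the entire procedure $O(\log(1/\delta))$ times and output the median, invoking a standard powering/Chernoff argument together with a union bound over the $O(\log(nM/\|u\|_1))$ rounds. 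This contributes the final $\log(1/\delta)$ factor and completes the stated running-time bound.
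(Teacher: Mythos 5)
Correct. The paper does not prove this statement---it imports it verbatim as a black-box Fact from Hamoudi and Magniez \cite{Hamoudi2019}---and your argument (encoding $p=\|u\|_1/(nM)$ as an amplitude via the controlled rotation, running amplitude estimation with $N=\Theta(\nu^{-1}/\sqrt{p})$, resolving the unknown $p$ by exponential doubling, and boosting to success probability $1-\delta$ by a median of $O(\log(1/\delta))$ repetitions with geometrically weighted failure probabilities across rounds) is precisely the standard proof of that cited lemma, so your proposal is essentially the same approach as the source the paper relies on.
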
 

\begin{algorithm}[h]
\caption{Hybrid quantum-classical online sampling-based dual subgradient robust optimization algorithm}
\label{algorithm_q_sampling}
\begin{algorithmic}[1]
    \Require Accuracy $\epsilon>0$, failure probability $\delta\in(0,1)$, parameters $D,G_1,G_2,G_\infty,F$;
    \State Set $T = \big\lceil\frac{1}{\epsilon^2} \max\big\{4F\log(\frac{m}{\delta}), \frac{225D^2}{16}\big(G_2^2 + \frac{G_1G_\infty - G_2^2}{s}\big) \big\} \big\rceil$;
    \State Set $\eta^{(t)} = \frac{4D}{3\sqrt{t+1}}\big(G_2^2 + \frac{G_1G_\infty - G_2^2}{s}\big)^{-1/2}$;
    \State Initialize $(u^{(0)}_1, \dots, u^{(0)}_m)\in\mathcal{U}^m$ and $x^{(0)}\in\mathcal{D}$ arbitrarily;
    \For {$t=0$ to $T-1$}
    \State \label{line:line5}\parbox[t]{\dimexpr\linewidth-\algorithmicindent}{Sample $s$ pairs $S^{(t)}=((i_1,j_1),\dots,(i_s,j_s))\in([m]\times[d])^s$ with probability at least $1-\delta/T$ by measuring $s$ copies of the quantum state $\sum_{i=1}^m\sum_{j=1}^d \sqrt{p^{(t)}(i,j)}\ket{i}\ket{j}$ (\Cref{quantum_multi_sample}), where}
    \begin{align*}
        p^{(t)}(i,j) = \frac{\big|\big(\nabla_u f_i(x^{(t)},u_i^{(t)})\big)_j\big|}{\sum_{k=1}^m \big\|\nabla_u f_k(x^{(t)},u_k^{(t)})\big\|_1};
    \end{align*}
    \State \label{line:line6}Get an estimate $\Gamma^{(t)}$ of $\sum_{k=1}^m \big\|\nabla_u f_k(x^{(t)},u_k^{(t)})\big\|_1$ with relative error $\frac{1}{4}$ (\Cref{quantum_l1_norm});
    \State \parbox[t]{\dimexpr\linewidth-\algorithmicindent}{Query the (classical) oracle $\mathcal O_\nabla$ with inputs $(i,j)\in S^{(t)}$, $x^{(t)}$, and $u_i^{(t)}$ to prepare $g_i^{(t)}$ as}
    \begin{align*}
        (g_i^{(t)})_j = \frac{|\{(k,\ell)\in S^{(t)}:k=i,\ell=j\}|}{s}\frac{ \operatorname{sign}\big[(\nabla_u f_i(x^{(t)},u_i^{(t)}))_j\big]}{(\Gamma^{(t)})^{-1}};
    \end{align*}
    \For {$i$ in $S^{(t)}$}
    \State $u_i^{\prime(t+1)}\gets u_i^{(t)} + \eta^{(t)} g_i^{(t)}$;
    \State $u_i^{(t+1)}\gets\mathcal{P}_{\mathcal{U}}(u_i^{\prime(t+1)})$; \Comment{Update noise memory}
    \EndFor
    \State $x^{(t+1)}\gets\mathcal{O}_\epsilon(u_1^{(t+1)},\dots,u_m^{(t+1)})$;
    \If {oracle declares infeasibility}
    \Return \texttt{INFEASIBLE};
    \EndIf
    \EndFor
    \Ensure $\bar{x}=\frac{1}{T}\sum_{t=1}^T x^{(t)}$;
\end{algorithmic}
\end{algorithm}
\begin{theorem}\label{correctness_runtime_q_sampling}
    For $i\in[m]$, let $f_i: \mathcal{D}\times \mathcal{U}\to \mathbb{R}$ be a convex function over variables from a convex set $\mathcal{D}\subseteq\mathbb{R}^n$ and parameters from a convex set $\mathcal{U}\subseteq\mathbb{R}^d$ with diameter $\max_{u,v\in\mathcal{U}}\| u-v\|_2\leq D$. Let $F,G_2\in\mathbb{R}_{>0}$ be constants such that $|f_i(x,u)| \leq F$ and $\|\nabla_u f_i(x,u)\|_2\leq G_2$ for all $x\in\mathcal{D}$ and $u\in\mathcal{U}$. Let also $G_1^{(i)}\in\mathbb{R}_{>0}$ be constants for $i\in[m]$ such that $\|\nabla_u f_i(x,u)\|_1\leq G_1^{(i)}$ for all $x\in\mathcal{D}$ and $u\in\mathcal{U}$, and define $G_1 := \sum_{i=1}^m G_1^{(i)}$ and $G_\infty := \max_{i\in[m]} G_1^{(i)}$. Let $s\in\mathbb{N}$ and 
    \begin{align*}
        T = \left\lceil\frac{1}{\epsilon^2} \max\left\{4F\log\Big(\frac{m}{\delta}\Big), \frac{225D^2}{16}\left(G_2^2 + \frac{G_1G_\infty - G_2^2}{s}\right) \right\} \right\rceil.
    \end{align*}
    {\rm \Cref{algorithm_q_sampling}} either returns a $3\epsilon$-approximate solution to the robust program with probability at least $1-\delta$ or correctly declares that it is infeasible. Its runtime is
    \begin{align*}
        \widetilde{O}\left( \frac{(\sqrt{smd} + s)\mathcal{T}_\nabla + \min\{m,s\}\mathcal{T}_{\mathcal{P}} + \mathcal{T}_\epsilon}{\epsilon^2} \max\left\{F, D^2\left(G_2^2 + \frac{G_1G_\infty - G_2^2}{s}\right)\right\}\right).
    \end{align*}
    In particular, if $s = \lceil G_1G_\infty/G_2^2\rceil$, then the runtime is
    \begin{align*}
        \widetilde{O}\left( \frac{(\sqrt{G_1G_\infty}/G_2)\sqrt{md}\mathcal{T}_\nabla + \min\{G_1G_\infty/G_2^2,m\}\mathcal{T}_{\mathcal{P}} + \mathcal{T}_\epsilon}{\epsilon^2} \max\{F, D^2G_2^2\}\right).
    \end{align*}
\end{theorem}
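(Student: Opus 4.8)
The plan is to recognize the vectors $g_1^{(t)},\dots,g_m^{(t)}$ assembled in Alg.~\ref{algo_q_sampling} as the output of a stochastic subgradient oracle obeying Eq.~\eqref{eq:gradient_oracle}, and then to invoke Theorem~\ref{thr:correctness_general} as a black box for correctness. Throughout I condition on the event $E$ that all $T$ multi-samplings (Fact~\ref{quantum_multi_sample}) and all $T$ $\ell_1$-norm estimations (Fact~\ref{quantum_l1_norm}) succeed; since each is run with failure probability $O(\delta/T)$, a union bound gives $\Pr[E]\ge 1-O(\delta)$, and the resulting $\poly\log(T/\delta)$ overheads are swallowed by $\widetilde O(\cdot)$. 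Crucially, the two quantum subroutines act on independent registers, so conditioning on $E$ does not distort the sampling distribution.

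Write $N^{(t)}=\sum_k\|\nabla_u f_k(x^{(t)},u_k^{(t)})\|_1$ and let $c_{ij}=|\{(k,\ell)\in S^{(t)}:k=i,\ell=j\}|$ be the sample counts, which are multinomial with $s$ trials and cell probabilities $p^{(t)}(i,j)=|(\nabla_u f_i)_j|/N^{(t)}$. From $\mathbb E[c_{ij}]=s\,p^{(t)}(i,j)$ and the definition of $g_i^{(t)}$ I get $\mathbb E[(g_i^{(t)})_j\mid \Gamma^{(t)}]=\Gamma^{(t)}\,p^{(t)}(i,j)\operatorname{sign}[(\nabla_u f_i)_j]=(\Gamma^{(t)}/N^{(t)})(\nabla_u f_i)_j$, i.e.\ $\mathbb E[g_i^{(t)}\mid\Gamma^{(t)}]=\lambda^{(t)}\nabla_u f_i$ with $\lambda^{(t)}=\Gamma^{(t)}/N^{(t)}$; the relative-error-$1/4$ guarantee of Fact~\ref{quantum_l1_norm} then gives $|\lambda^{(t)}-1|\le 1/4$ on $E$, so $\nu=1/4$. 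For the second moment, the multinomial identity $\mathbb E[c_{ij}^2]=s\,p(i,j)(1-p(i,j))+s^2p(i,j)^2$ together with $\Gamma^{(t)}\le\tfrac54 N^{(t)}$ yields
\begin{align*}
\mathbb{E}\big[\|g_i^{(t)}\|_2^2\big]=(\Gamma^{(t)})^2\Big[\tfrac1s\tfrac{\|\nabla_u f_i\|_1}{N^{(t)}}+\big(1-\tfrac1s\big)\tfrac{\|\nabla_u f_i\|_2^2}{(N^{(t)})^2}\Big]\le \tfrac{25}{16}\Big[\tfrac{N^{(t)}\|\nabla_u f_i\|_1}{s}+\big(1-\tfrac1s\big)\|\nabla_u f_i\|_2^2\Big],
\end{align*}
and bounding $N^{(t)}\le G_1$, $\|\nabla_u f_i\|_1\le G_\infty$, $\|\nabla_u f_i\|_2^2\le G_2^2$ gives $\mathbb E[\|g_i^{(t)}\|_2^2]\le\widetilde G_2^2$ with $\widetilde G_2^2=\tfrac{25}{16}\big(G_2^2+\tfrac{G_1G_\infty-G_2^2}{s}\big)$. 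Instantiating Theorem~\ref{thr:correctness_general} with $\nu=1/4$ (so $(1+4\nu)^2=4$) and this $\widetilde G_2$ reproduces exactly the threshold $T$ and the step-size prescription $\eta^{(t)}\propto D/(\widetilde G_2\sqrt{t+1})$ of Alg.~\ref{algo_q_sampling}, so its $3\epsilon$-guarantee (or a correct \texttt{INFEASIBLE} verdict) transfers with probability $1-\delta$.

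For the runtime, one iteration costs: $\widetilde O(\sqrt{smd}\,\mathcal T_\nabla)$ to prepare the $s$ samples (Fact~\ref{quantum_multi_sample} on the $md$-entry subgradient vector, each entry accessed through $\mathcal O_\nabla$); $\widetilde O(\sqrt{md\,G_2/N^{(t)}}\,\mathcal T_\nabla)$ for the norm estimate (Fact~\ref{quantum_l1_norm} with $M=G_2\ge\|\cdot\|_\infty$), which is subsumed by the sampling term; $s\,\mathcal T_\nabla$ to query the signs of the sampled entries; $\min\{m,s\}\,\mathcal T_{\mathcal P}$ for the projections, since $g_i^{(t)}=0$ for every block not hit by a sample (so $u_i$ is unchanged) and at most $\min\{m,s\}$ distinct blocks are hit; and $\mathcal T_\epsilon$ for the optimization oracle. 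Multiplying by $T=O\big(\epsilon^{-2}\max\{F\log(m/\delta),D^2\widetilde G_2^2\}\big)$ gives the first stated bound. The specialization to $s=G_1G_\infty/G_2^2$ follows by noting $G_2^2+\tfrac{G_1G_\infty-G_2^2}{s}\le 2G_2^2$ and that $s\le md$ (from $G_1G_\infty\le mdG_2^2$), whence $s\,\mathcal T_\nabla\le\sqrt{smd}\,\mathcal T_\nabla$ and the sign-query term drops out.

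The main obstacle is the second-moment computation: one must keep the exact multinomial variance term $(1-1/s)$ and the $\tfrac{25}{16}$ factor coming from $\Gamma^{(t)}\le\tfrac54 N^{(t)}$ so that $\widetilde G_2^2$ matches and the algorithm's $T$ is reproduced verbatim. A secondary subtlety is justifying that $\lambda^{(t)}=\Gamma^{(t)}/N^{(t)}$ is an admissible (conditionally fixed) multiplier in Eq.~\eqref{eq:gradient_oracle} despite $\Gamma^{(t)}$ being random, which is precisely what conditioning on the independent success event $E$ buys us; verifying that the $\ell_1$-norm estimation cost is genuinely dominated by the multi-sampling cost is the remaining piece of routine bookkeeping.
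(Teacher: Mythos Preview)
Your proposal is correct and follows essentially the same approach as the paper: you verify that the sampled vectors $g_i^{(t)}$ satisfy the stochastic-subgradient conditions of Eq.~\eqref{eq:gradient_oracle} via the multinomial mean and second-moment identities (with $\nu=1/4$ and $\widetilde G_2^2=\tfrac{25}{16}\big(G_2^2+\tfrac{G_1G_\infty-G_2^2}{s}\big)$), then invoke Theorem~\ref{thr:correctness_general} for correctness and tally the per-iteration costs for the runtime. Your treatment is in fact slightly more careful than the paper's in making explicit the conditioning on the joint success event~$E$, the fact that $\lambda^{(t)}=\Gamma^{(t)}/N^{(t)}$ is conditionally fixed, and the justification that $s\le md$ absorbs the sign-query term in the specialization; the only point you flag as ``routine bookkeeping'' (the $\sqrt{M/\|u\|_1}$ factor in the norm-estimation cost) is glossed over in the paper as well.
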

\begin{proof}
    The correctness of the algorithm comes from \Cref{thr:correctness_general} by proving that the sampled stochastic subgradients satisfy the requirements of the oracle $\mathcal{O}_g$ from \Cref{eq:gradient_oracle}. Indeed, note that $\mathbb{E}_t[| \{(k,\ell)\in S^{(t)}:k=i,\ell=j\}|] = sp^{(t)}(i,j)$ according to the multinomial distribution and also that $\Gamma^{(t)} = \lambda^{(t)}\sum_{k=1}^m \|\nabla_u f_k(x^{(t)}, u_k^{(t)})\|_1$, where $\lambda^{(t)}\in\mathbb{R}$ is such that $|\lambda^{(t)} - 1| \leq 1/4$. Thus the random variables $g_i^{(t)}\in\mathbb{R}^d$ from \Cref{algorithm_q_sampling} satisfy
    \begin{align*}
        \mathbb{E}_t[g_i^{(t)}] = \lambda^{(t)}\nabla_u f_i(x^{(t)},u_i^{(t)}).
    \end{align*}
    Moreover, using that 
    \begin{align*}
        \mathbb{E}_t[|\{(k,\ell)\in S^{(t)}:k=i,\ell=j\}|^2] = sp^{(t)}(i,j)(1-p^{(t)}(i,j)) + s^2p^{(t)}(i,j)^2
    \end{align*}
    by the multinomial distribution, and that
    \begin{align*}
        \Gamma^{(t)} p^{(t)}(i,j) = \lambda^{(t)}\big|(\nabla_u f_i(x^{(t)},u_i^{(t)}))_j\big|,
    \end{align*}
    we obtain
    \begin{align*}
        \frac{\mathbb{E}_t[\|g_i^{(t)}\|_2^2]}{(\lambda^{(t)})^2} &= \frac{1}{(\lambda^{(t)})^2}\sum_{j=1}^d \frac{\mathbb{E}_t[|\{(k,\ell)\in S^{(t)}:k=i,\ell=j\}|^2]}{s^2(\Gamma^{(t)})^{-2}}\\
        &= \frac{1}{(\lambda^{(t)})^2}\sum_{j=1}^d (\Gamma^{(t)})^2p^{(t)}(i,j)^2 + \frac{1}{s}(\Gamma^{(t)})^2 p^{(t)}(i,j)(1-p^{(t)}(i,j))\\
        &= \big\|\nabla_u f_i(x^{(t)}, u_i^{(t)})\big\|_2^2 + \frac{1}{s}\sum_{j=1}^d \frac{1-p^{(t)}(i,j)}{p^{(t)}(i,j)}\big|(\nabla_u f_i(x^{(t)},u_i^{(t)}))_j\big|^2\\
        &= \left(\!1 - \frac{1}{s}\!\right)\!\big\|\nabla_u f_i(x^{(t)}, u_i^{(t)})\big\|_2^2 + \frac{\big\|\nabla_u f_i(x^{(t)}, u_i^{(t)})\big\|_1}{s}\sum_{k=1}^m \!\big\|\nabla_u f_k(x^{(t)}, u_k^{(t)})\big\|_1\\
        &\leq G_2^2 + \frac{G_1G_\infty - G_2^2}{s}.
    \end{align*}
    Therefore we just set $\widetilde{G}_2 := (1+\nu)\big(G_2^2 + \frac{G_1G_\infty - G_2^2}{s}\big)$ and $\nu=1/4$ in \Cref{thr:correctness_general}.
    
    We now consider the runtime of the algorithm:
    \begin{itemize}
        \item Sampling $Ts$ numbers from $\sum_{i=1}^m\sum_{j=1}^d \sqrt{p^{(t)}(i,j)}|i,j\rangle$ takes $\widetilde{O}(T\sqrt{smd}\mathcal{T}_\nabla)$ time;
        \item Estimating $\sum_{k=1}^m \big\|\nabla_u f_k(x^{(t)},u_k^{(t)})\big\|_1$ takes $\widetilde{O}(T\sqrt{md}\mathcal{T}_\nabla)$ time;
        \item Defining the vectors $g^{(t)}_i$ and obtaining $u_i^{(t)} + \eta^{(t)}g_i^{(t)}$ takes $O(Ts\mathcal{T}_\nabla)$ time;
        \item Using the projector oracles $\mathcal{P}_{\mathcal{U}}$ takes $O(T\min\{m,s\}\mathcal{T}_{\mathcal{P}})$ time;
        \item Using the optimization oracles $\mathcal{O}_\epsilon$ takes $O(T\mathcal{T}_\epsilon)$ time.
    \end{itemize}
    All in all, plugging the value $T = O\Big(\frac{1}{\epsilon^2} \max\big\{F\log(\frac{m}{\delta}), D^2\big(G_2^2 + \frac{G_1G_\infty - G_2^2}{s}\big) \big\} \Big)$, we get
    \begin{align*}
        &\widetilde{O}\Big(T\big((\sqrt{smd} + s)\mathcal{T}_\nabla + \min\{m,s\}\mathcal{T}_{\mathcal{P}} + \mathcal{T}_\epsilon\big)\Big) \\
        &= \widetilde{O}\left( \frac{(\sqrt{smd} + s)\mathcal{T}_\nabla + \min\{m,s\}\mathcal{T}_{\mathcal{P}} + \mathcal{T}_\epsilon}{\epsilon^2} \max\left\{F, D^2\left(G_2^2 + \frac{G_1G_\infty - G_2^2}{s}\right)\right\}\right).  \qedhere
    \end{align*}
\end{proof}

\begin{remark}
    It is possible to consider an equivalent classical-sampling algorithm to {\rm \Cref{algorithm_q_sampling}} which samples from the vector $(\nabla_u f_1(x,u_1),\dots,\nabla_u f_m(x,u_m))$ classically. For such, a sampling data structure must be constructed for every time step $t\in[T]$, which can be done in time $O(md\mathcal{T}_\nabla)$~{\rm \citep{Vose1991}}. This is, unfortunately, not asymptotically better than computing all the subgradients exactly. An advantage can be obtained, though, in the complexity for oracle $\mathcal{O}_{\mathcal{P}}$ similarly to the hybrid quantum-classical algorithm.
\end{remark}

\begin{remark}
    All steps in {\rm \Cref{algorithm_q_sampling}} are entirely classical apart from {\rm \Cref{line:line5,line:line6}}. 
\end{remark}

\section{Applications and examples}

In this section, we provide a few examples to which our hybrid quantum-classical algorithm can be applied: robust linear and semidefinite programming problems. We shall focus on ellipsoidal uncertainty, being the most common model of data uncertainty since it has a simple parametric representation, can be easily handled numerically, and there are probabilistic arguments which allow the replacement of stochastic uncertainty with ellipsoidal deterministic uncertainty (see~\citep{ben1998robust,ben1999robust} for more information). Moreover, this choice also allows us to compare our results with that of \cite{Ben-Tal2015a}.

\subsection{Robust linear programming}

A linear program (LP) in standard form is
\begin{align}
    \label{reduced_LP}
    \begin{aligned}
    & \exists ?
    & & x\in\mathbb R^n\\
    & \text{s.t.}
    & & a_i^\top x -  b_i\leq 0, \quad i\in[m],\\
    \end{aligned}
\end{align}
where $a_1,\dots,a_m\in\mathbb R^n$ and $b\in\mathbb R^m$.  We assume below that $\mathcal{D}\subseteq\{x\in\mathbb{R}^n:\|x\|_1\leq 1\}$. The robust counterpart of \Cref{reduced_LP} with ellipsoidal noise takes the form of
\begin{align}
    \label{robust_LP}
    \begin{aligned}
    & \exists ?
    & & x\in\mathbb R^n\\
    & \text{s.t.}
    & & \left(a_i + P_i u_i\right)^\top x -  b_i\leq 0, \hspace{1cm} \forall u_i\in\mathcal U, \quad i\in[m],\\
    \end{aligned}
\end{align}
where $P_i\in\mathbb R^{n\times d}$ controls the shape of the ellipsoidal uncertainty and $\mathcal U=\{u\in\mathbb R^d: \| u\|_2\leq 1\}$ is the $d$-dimensional unit ball. The robust optimization problem in \Cref{robust_LP} is a second-order conic program that can be efficiently solved~\citep{ben2009robust,bertsimas2011theory}. An advantage of solving a robust LP using its original formulation stems from the fact that many cases of interest present some special combinatorial structure which allows for highly efficient solvers compared to generic LP solvers, e.g., network flow problems. Since such special structure is lost for their robust counterparts, solving the robust problem using an oracle to the original formulation might be preferable in these cases.

\Cref{algorithm_q_sampling} can be used to solve the robust LP in \Cref{robust_LP} since the constraints are linear with respect to the noise vectors $u_i$. We have that $\nabla_u f_i(x,u) = P_i^\top x$, and so \Cref{algorithm_q_sampling} updates the variables $u_i^{(t)}$ according to
\begin{align*}
    u_i^{(t+1)} \gets \frac{u_i^{(t)} + \eta^{(t)} g_i^{(t)}}{\max\big\{1,\|u_i^{(t)} + \eta^{(t)} g_i^{(t)}\|_2\big\}}, \quad \text{where}\quad \mathbb{E}[g_i^{(t)}] = \lambda^{(t)} P_i^\top x^{(t)}.
\end{align*}
Applying \Cref{correctness_runtime_q_sampling} to such robust LPs, we obtain the following corollary.
\begin{corollary}\label{robust_LP_runtime}
    {\rm \Cref{algorithm_q_sampling}} gives an $3\epsilon$-solution to {\rm \Cref{robust_LP}} in time
    \begin{align*}
        \widetilde O\left(\left(\frac{\sqrt{G_1G_\infty}}{G_2}\sqrt{md}\mathcal T_\nabla + \min\bigg\{m,\frac{G_1G_\infty}{G_2^2}\bigg\}\mathcal T_{\mathcal P} + \mathcal T_\epsilon\right)\frac{G_2^2}{\epsilon^2}\right),
    \end{align*}
    where $G_2 = \max_{i\in[m]}\|P_i\|_2$, $G_\infty = \max_{i\in[m]}\|P_i\|_\infty$, and $G_1 = \sum_{i=1}^m \|P_i\|_\infty$. In particular, if $P_i = P$ for all $i\in[m]$, then the runtime is
    \begin{align*}
        \widetilde O\left(\left(\frac{\|P\|_\infty}{\|P\|_2}
        m\sqrt{d}\mathcal T_\nabla + \min\bigg\{1,\frac{\|P\|_\infty^2}{\|P\|_2^2}\bigg\}m\mathcal T_{\mathcal P} + \mathcal T_\epsilon\right)\frac{\|P\|_2^2}{\epsilon^2}\right).
    \end{align*}
\end{corollary}
\begin{proof}
Note that for all $u_i\in\mathcal U$ and $\| x\|_2 \leq \|x\|_1 \leq 1$,
\begin{align*}
    \| \nabla_u f_i(x, u_i)\|_2 &= \|P_i^\top x\|_2 \leq \| P_i\|_2 \| x\|_2 \leq \|P_i\|_2,\\
    \| \nabla_u f_i(x, u_i)\|_1 &= \|P_i^\top x\|_1 \leq \| P_i^\top\|_1 \|x\|_1 \leq \|P_i\|_\infty.
\end{align*}
Therefore $G_\infty = \max_{i\in[m]}\|P_i\|_\infty$, $G_1 = \sum_{i=1}^m\|P_i\|_\infty$, and $G_2 = \max_{i\in[m]}\|P\|_2$. Using this in \Cref{correctness_runtime_q_sampling} and also that $D = 2$ and that
\begin{align*}
    F = \operatorname*{\max}_{\|x\|_1,\|u\|_2\leq 1}|x^\top P_i u| \leq \operatorname*{\max}_{\|x\|_1,\|u\|_2\leq 1}\|x\|_2 \|P_i\|_2 \|u\|_2 \leq G_2
\end{align*}
(see \Cref{rem:remark1}), we obtain the desired runtime. 
\end{proof}

\begin{remark}\label{remark:LP}
    The corresponding classical runtime of {\rm \citep[Corollary~1]{Ben-Tal2015a}} for solving the robust LP from {\rm \Cref{robust_LP}} with $P_i=P$, $i\in[m]$, is
    \begin{align*}
        \widetilde O\left(\left(md\mathcal T_{\nabla} + m\mathcal T_{\mathcal P}+ \mathcal T_\epsilon\right)\frac{\|P\|^2_2}{\epsilon^2}\right).
    \end{align*}
    Comparing between the classical and quantum runtime for solving robust LPs, we see an improvement in the dimension $d$ if $\|P\|_\infty/\|P\|_2 = o(\sqrt{d})$.
\end{remark}

One practical example of robust LP in finance is the global maximum return portfolio (GMRP) robust optimization problem~\citep{goldfarb2003robust}. The problem considers $n$ assets in a setting with $m$ markets, each having a different expected return vector. Let $x\in\mathbb{R}^n_{\geq 0}$ such that $\sum_{i=1}^n x_i = 1$ be the portfolio vector. In each market $i\in[m]$, the expected return of the assets is given by the vector $r_i\in\mathbb{R}^n$. The GMRP optimization problem aims to find a portfolio that maximizes the return (without considering the variance), and can be represented in the following linear program, where $\mathcal{D} \subseteq \{x\in\mathbb{R}^n_{\geq 0}:\|x\|_1 = 1\}$ and $c>0$ is a minimum expected return, 
\begin{eqnarray}\nonumber
\exists ?
& & x\in\mathcal{D}\\
\nonumber\text{s.t.} & & r_i^\top x\geq c, \quad \forall i\in[m].
\end{eqnarray}
However, the returns $r_1,\dots,r_m$ are unknown in practice and need to be estimated over some period of time, thus being extremely noisy~\citep{chopra2013effect}. Therefore, instead of assuming that $r_1,\dots,r_m$ are known perfectly, we assume that they belong to some convex uncertainty set. More specifically, the exact return $r_i$ deviates from a given estimation $\widetilde{r}_i$ due to an ellipsoidal noise such that $r_i = \widetilde{r}_i + \kappa_i S_i^{1/2}u_i$, where, for $i\in[m]$, $\kappa_i\in\mathbb{R}_{> 0}$ is a constant, $S_i\in\mathbb{R}^{n\times d}$ controls the shape of the ellipsoid, and $u_i\in\mathcal{U} = \{u\in\mathbb{R}^d:\|u\|_2\leq 1\}$ is the usual $d$-dimensional unit ball. The robust framework allows to include the return covariance matrix $\Sigma_i$ for the $i$-th market (analogous to the Markovitz framework~\citep{HarryMarkowitz1952}) by setting $S_i = \Sigma_i$~\citep{lobo2000worst}. The problem can now be rewritten as 
\begin{eqnarray}\nonumber
\exists ?
& & x\in\mathcal{D}\\
\nonumber \text{s.t.} & & \big(\widetilde{r}_i + \kappa_i S_i^{1/2}u_i\big)^\top x \geq c, \hspace{1cm} \forall u_i\in\mathcal{U},\quad i\in[m].
\end{eqnarray}
In this case, $f_i(x,u_i) = c - \big(\widetilde{r}_i + \kappa_i S_i^{1/2}u_i\big)^\top x$ and so $\nabla_u f_i(x,u_i) = - \kappa_i (S^{1/2}_i)^\top x$. \Cref{robust_LP_runtime} readily gives the runtime of \Cref{algorithm_q_sampling} applied to the above robust GMRP problem just by replacing $P_i$ with $\kappa_i S_i^{1/2}$.

\subsection{Robust Semidefinite Programming}

Let $\mathbb{S}_+^n := \{X\in\mathbb{R}^{n\times n}: X\succeq 0\}$ be the cone of $n\times n$ positive semidefinite matrices and $\mathcal{D} \subseteq \{X\in\mathbb{S}_+^n: \|X\|_F \leq 1\}$. A semidefinite program (SDP) in standard form is given by 
\begin{eqnarray}
\nonumber\exists? & & X\in\mathcal{D}\\
\nonumber\text{s.t.} & & A_i\bullet X\leq b_i, \quad i\in[m],
\end{eqnarray}
where $A_i\in\mathbb R^{n\times n}$ and $b_i\in\mathbb{R}$, for $i\in[m]$, are given, and $A\bullet X := \operatorname{Tr}[A^\top X]$ denotes the inner product between matrices $A$ and $X$. A robust counterpart of the above SDP with an ellipsoidal uncertainties is of the form
\begin{align}
    \label{robust_SDP}
    \begin{aligned}
    & \exists ?
    & & X\in\mathcal{D} \\
    & \text{s.t.}
    & & \left(A_i + \sum_{j=1}^d u_{ij} P_j\right)\bullet X - b_i\leq 0, \quad \forall u_i\in\mathcal U, ~i\in[m],
    \end{aligned}
\end{align}
where $P_j\in\mathbb R^{n\times n}$ are fixed for $j\in[d]$ and $\mathcal{U}= \{u\in\mathbb R^d : \| u\|_2\leq 1\}$. It is known that the robust counterpart of an SDP is NP-hard even with ellipsoidal uncertainties~\citep{ben2002robust,ben2009robust}. It is possible to apply \Cref{algorithm_q_sampling} to the robust SDP since the constraints are linear with respect to the noise vectors. Then $\nabla_u f_i(X,u) = (P_1\bullet X,\dots, P_d\bullet X)$ and so \Cref{algorithm_q_sampling} updates the variables $u_i^{(t)}$ according to
\begin{align*}
    u^{(t+1)}_i \gets \frac{u_i^{(t)} + \eta^{(t)}g_i^{(t)}}{\max\big\{1,\|u^{(t)}_i + \eta^{(t)}g_i^{(t)}\|_2\big\}}, \quad \text{where}\quad \mathbb{E}[g_i^{(t)}] = (P_1\bullet X^{(t)},\dots, P_d\bullet X^{(t)}).
\end{align*}
Applying \Cref{correctness_runtime_q_sampling} to the above robust SDPs, we obtain the following corollary.
\begin{corollary}\label{robust_SDP_runtime}
    {\rm \Cref{algorithm_q_sampling}} gives an $3\epsilon$-solution to  {\rm \Cref{robust_SDP}} in time 
    \begin{align*}
        \widetilde O\left(\left(\frac{\Sigma}{\sigma} m\sqrt{d}\mathcal T_\nabla + m\mathcal T_{\mathcal P}+ \mathcal T_\epsilon\right)\frac{\sigma^2}{\epsilon^2}\right),
    \end{align*}
    where $\sigma^2 = \sum_{j=1}^d \| P_j\|_F^2$ and $\Sigma = \sum_{j=1}^d \| P_j\|_F$.
\end{corollary}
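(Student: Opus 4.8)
The plan is to specialize Theorem~\ref{correctness_runtime_q_sampling} to the robust SDP in Eq.~\eqref{robust_SDP} by computing the parameters $G_2,G_1,G_\infty,D,F$ for this instance and then substituting the choice $s = G_1G_\infty/G_2^2$ into the stated runtime. The starting observation is that each constraint function $f_i(X,u_i) = \big(A_i + \sum_{j=1}^d u_{ij}P_j\big)\bullet X - b_i$ is affine in the noise $u_i$, with subgradient $\nabla_u f_i(X,u) = (P_1\bullet X,\dots,P_d\bullet X)$ that is independent of both $i$ and $u$. Thus the whole derivation reduces to bounding the $\ell_2$- and $\ell_1$-norms of this single vector over $X\in\mathcal{D}$.

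First I would bound the norms using the Cauchy--Schwarz inequality for the Frobenius inner product, $|P_j\bullet X| = |\operatorname{Tr}[P_j^\top X]| \leq \|P_j\|_F\|X\|_F \leq \|P_j\|_F$, where the last step uses $\mathcal{D}\subseteq\{X:\|X\|_F\leq 1\}$. This gives $\|\nabla_u f_i(X,u)\|_2^2 = \sum_{j=1}^d (P_j\bullet X)^2 \leq \sum_{j=1}^d \|P_j\|_F^2 = \sigma^2$ and $\|\nabla_u f_i(X,u)\|_1 = \sum_{j=1}^d |P_j\bullet X| \leq \sum_{j=1}^d \|P_j\|_F = \Sigma$. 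Since the subgradient is the same for every $i$, each per-index bound $G_1^{(i)}$ equals $\Sigma$, so $G_2 = \sigma$, $G_\infty = \max_i G_1^{(i)} = \Sigma$, and $G_1 = \sum_i G_1^{(i)} = m\Sigma$.

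Next I would fix the remaining two parameters. The uncertainty set $\mathcal{U}$ is the unit $\ell_2$-ball, so its diameter is $D = 2$. For $F$, I would invoke Remark~\ref{rem:remark1}, since $f_i$ is linear in $u_i$ with $\alpha_i(X) = (P_1\bullet X,\dots,P_d\bullet X)$ and $\beta_i(X) = A_i\bullet X - b_i$; this lets me bound only the noise-dependent part, $|\alpha_i(X)^\top u_i| \leq \|\alpha_i(X)\|_2\|u_i\|_2 \leq \sigma$, so I may take $F = \sigma$ rather than a (potentially much larger) bound involving $A_i$ and $b_i$. This use of Remark~\ref{rem:remark1} is the one genuinely non-mechanical step, and the place most prone to error if one instead tried to bound $|f_i|$ directly.

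Finally, substituting $s = G_1G_\infty/G_2^2 = m\Sigma^2/\sigma^2$ into the specialized runtime of Theorem~\ref{correctness_runtime_q_sampling}, I would simplify the three oracle terms: the subgradient term becomes $(\sqrt{G_1G_\infty}/G_2)\sqrt{md} = (\Sigma/\sigma)\sqrt{m}\cdot\sqrt{md} = (\Sigma/\sigma)m\sqrt{d}$; the projection term uses $\min\{G_1G_\infty/G_2^2, m\} = \min\{m\Sigma^2/\sigma^2, m\} = m$, which collapses because $\Sigma\geq\sigma$ (the $\ell_1$-norm of $(\|P_1\|_F,\dots,\|P_d\|_F)$ dominates its $\ell_2$-norm), so $\Sigma^2/\sigma^2\geq 1$; and the oracle term is simply $\mathcal{T}_\epsilon$. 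The leading factor is $\max\{F, D^2G_2^2\} = \max\{\sigma, 4\sigma^2\} = O(\sigma^2)$, matching the claimed $\sigma^2/\epsilon^2$ prefactor. Collecting these yields exactly the stated bound, and the only care needed beyond routine substitution is the $\Sigma\geq\sigma$ comparison and the Remark~\ref{rem:remark1} bound on $F$.
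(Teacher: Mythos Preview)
Your proposal is correct and follows essentially the same route as the paper's proof: bound $\|\nabla_u f_i(X,u)\|_2$ and $\|\nabla_u f_i(X,u)\|_1$ via the Frobenius Cauchy--Schwarz inequality, use Remark~\ref{rem:remark1} to take $F=\sigma$, set $D=2$, and substitute into Theorem~\ref{correctness_runtime_q_sampling}. You are even slightly more explicit than the paper in justifying $\min\{m\Sigma^2/\sigma^2,m\}=m$ via $\Sigma\geq\sigma$ and in writing out the simplification of each oracle term.
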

\begin{proof}
     Note that for all $u_i\in\mathcal U$ and $\|X\|_F\leq 1$, 
     \begin{align*}
        \| \nabla f_i(X, u_i)\|_2^2 &= \sum_{j=1}^d \vert P_j\bullet X\vert^2\leq \sum_{j=1}^d \| P_j\|^2_F \| X\|^2_F\leq \sum_{j=1}^d \| P_j\|^2_F = \sigma^2, \\
        \| \nabla f_i(X, u_i)\|_1 &= \sum_{j=1}^d \vert P_j\bullet X\vert\ \leq \sum_{j=1}^d \| P_j\|_F \| X\|_F\leq \sum_{j=1}^d \| P_j\|_F = \Sigma.
    \end{align*}
    Therefore $G_1 = m\Sigma$, $G_\infty = \Sigma$, and so $G_1G_\infty/G_2^2 = m\Sigma^2/\sigma^2$. Using this in \Cref{correctness_runtime_q_sampling} and also that $D = 2$ and that
    \begin{align*}
        F = \operatorname*{\max}_{\|u\|_2,\|X\|_F \leq 1} \left|\sum_{j=1}^d u_j P_j\bullet X \right| \leq \operatorname*{\max}_{\|u\|_2,\|X\|_F \leq 1} \|u\|_2\sqrt{\sum_{j=1}^d |P_j\bullet X|^2} \leq \sigma
    \end{align*}
    (see \Cref{rem:remark1}), we get the desired runtime. 
\end{proof}

\begin{remark}
    The corresponding classical runtime of {\rm \citep[Corollary~4]{Ben-Tal2015a}} for solving the robust SDP from {\rm \Cref{robust_SDP}} is
    \begin{align*}
        \widetilde O\left(\left(md\mathcal T_{\nabla} + m\mathcal T_{\mathcal P}+ \mathcal T_\epsilon\right)\frac{\sigma^2}{\epsilon^2}\right).
    \end{align*}
    Similar to the analysis in {\rm \Cref{remark:LP}}, a speedup on the dimension $d$ is possible if $\Sigma/\sigma = o(\sqrt{d})$, e.g., if $\|P_k\|_F \gg \max_{j\in[d]\setminus\{k\}}\|P_j\|_F$ for some $k\in[d]$. 
\end{remark}

\cite{ben1997robust,ben2000robust} considered the problem of (static) structural design as an example for a robust SDP, where the aim is to find a mechanical construction able to withstand a given external load. In particular, they were interested in the Truss Topology Design (TTD) problem. A truss is a structure that consists of thin elastic bars linked with each other at nodes and deforms at the weight of a given load, until tension that causes the deformation cancels out with the load. The compliance --- potential energy stored in the deformed truss --- measures the stiffness of the truss. The lesser the compliance, the more stiff the truss is, and hence a greater ability to withstand the weight of the load. 

In the TTD problem, we are given the geometry of the nodal set $v_1,\dots, v_m\in\mathbb R^n$ and a worst-case external load $f\in \mathbb{R}^n$. The optimization variables are the compliance $\tau$ and volume $t_i$ of the $i$-th bar, $i\in[m]$. The vector $t = (t_1, \dots, t_m) \in\mathbb R^m$ is known to belong to a polytope $\mathcal E$ of design restrictions, e.g., the simplex $\{t\in\mathbb{R}_{\geq 0}^m: \sum_{i=1}^m t_i = V\}$ for some volume $V\in\mathbb{R}_{>0}$. The TTD problem can be expressed as the following SDP (see~\citep[Eq.~(35)]{ben2000robust}):
\begin{align}
    \label{TTD}
    \begin{aligned}
    \operatorname*{minimize}_{t\in\mathbb{R}^m,\tau\in\mathbb{R}}
    & ~~ \tau\\
    \text{subject to} & ~~ \begin{bmatrix}\tau & f^\top\\ f & \sum_{i=1}^m t_i v_i v_i^\top\end{bmatrix}\succeq 0,\\
    & ~~\sum_{i=1}^m t_i = V,\\
    & ~~ t_i \geq 0, \quad i\in[m].
    \end{aligned}
\end{align}
Notice that \Cref{TTD} can be expressed in the form of a dual SDP as
\begin{eqnarray}
\begin{aligned}\nonumber
\operatorname*{maximize}_{y\in\mathbb{R}^{m+1}}
& ~~ b^\top y \\
\nonumber\text{subject to} & ~~ \sum_{i=1}^{m+1} y_i A_i - C \preceq 0,\\
& ~~ \sum_{i=1}^m y_i = -V,\\
& ~~ y_i \leq 0, \quad i\in[m],
\end{aligned}
\end{eqnarray}
where 
$$b = \begin{bmatrix}0\\ \vdots\\ 0\\ 1\end{bmatrix}, ~~ y = \begin{bmatrix} -t_1\\ \vdots \\ -t_m \\ -\tau\end{bmatrix}, ~~ C = \begin{bmatrix} 0 & f^\top\\ f & 0\end{bmatrix}, ~~ A_i = \begin{bmatrix} 0 & 0 \\ 0 & v_i v_i^\top\end{bmatrix} ~\text{for}~ i\in[m], ~~ A_{m+1} = \begin{bmatrix} 1 & 0 \\ 0 & 0_n \end{bmatrix}.$$
The corresponding primal SDP is then 
\begin{eqnarray}
\begin{aligned}\nonumber
\operatorname*{minimize}_{z\in\mathbb{R},X\in\mathbb{S}_+^{n+1}}
& ~~ C\bullet X + Vz \\\nonumber
\text{subject to} & ~~ A_i\bullet X - z\leq 0, \quad \forall i\in[m],\\\nonumber
& ~~ A_{m+1}\bullet X = 1,
\end{aligned}
\end{eqnarray}
or, in its feasibility form similar to \Cref{feasibility_problem} and already taking the feasibility domain $\mathcal{D} = \{X\in\mathbb{S}_+^{n+1}:\|X\|_F \leq 1, X_{11} = 1\}\times[-1,1]$ (where the constraint $X_{11} = 1$ comes from $A_{m+1}\bullet X = 1$),
\begin{align}
    \begin{aligned}\label{TTD_SDP}
    \exists ?
    & ~~ (X,z)\in\mathcal{D} \\
    \text{s.t.} & ~~  \lambda - C\bullet X - Vz\leq 0,\\
    & ~~ A_i \bullet X - z\leq 0, \quad \forall i\in[m].
    \end{aligned}
\end{align}
The ``standard'' TTD problem only takes into consideration a finite set of relevant external loads, which is usually the engineer's guess of relevant scenarios. However, even a small load not belonging to the relevant set might lead to a large deformation of the truss. It is then important to consider a robust TTD problem with a larger set of external loads containing the original one. For simplicity, we shall take such larger set as the ellipsoid centered at the origin, $Q\mathcal{U} := \{Qu\in\mathbb{R}^n:u\in\mathbb{R}^d,\|u\|_2\leq 1\}$, where $Q\in\mathbb{R}^{n\times d}$ is a scale matrix and $\mathcal{U}$ is the usual $d$-dimensional unit ball. The robust version of \Cref{TTD_SDP} is thus (see~\citep[Section~4]{ben1997robust}):
\begin{eqnarray}\begin{aligned}\nonumber
\exists ?
& ~~ (X,z)\in \mathcal{D}\\
\text{s.t.} & ~~ \lambda - \left(\sum_{j=1}^d u_j P_j\right)\bullet X - Vz \leq 0,\quad \text{for }u\in\mathcal{U},\\
& ~~ A_i \bullet X - z\leq 0, \quad \forall i\in[m],
\end{aligned}
\end{eqnarray}
where $P_j = \bigl(\begin{smallmatrix}
0&Q_j^\top \\ Q_j&0
\end{smallmatrix} \bigr)\in\mathbb R^{(n+1)\times (n+1)}$ and $Q_j$ denotes the $j$-th column of $Q$, since 
\begin{align*}
    \begin{bmatrix}
        0 & (Qu)^\top \\
        Qu & 0
    \end{bmatrix} = 
    \begin{bmatrix}
        0 & \sum_{j=1}^d Q_{1j}u_j & \sum_{i=j}^d Q_{2j}u_j & \cdots & \sum_{j=1}^d Q_{nj}u_j \\
        \sum_{j=1}^d Q_{1j}u_j & 0 & 0 & \cdots & 0\\
        \sum_{j=1}^d Q_{2j}u_j & 0 & 0 & \cdots & 0\\
        \vdots & \vdots & \vdots & \ddots & \vdots\\
        \sum_{j=1}^d Q_{nj}u_j & 0 & 0 & \cdots & 0
    \end{bmatrix} .
\end{align*}
In this case, $f(X,z, u)=\lambda - \big(\sum_{j=1}^d u_j P_j\big)\bullet X - Vz$ and so $\nabla_u f(X,z, u) = -(P_1\bullet X,\dots,P_d\bullet X)$. The runtime of \Cref{algorithm_q_sampling} applied to the above robust TTD problem follows readily from \Cref{robust_SDP_runtime} with $m=1$, since the ellipsoidal noise affects only one constraint, and
\begin{align*}
    \sigma^2 = \sum_{j=1}^d\|P_j\|_F^2 = 2\sum_{j=1}^d \|Q_j\|^2_2 = 2\|Q\|_F^2,\quad \quad \quad
    \Sigma = \sum_{j=1}^d\|P_j\|_F = \sqrt{2}\sum_{j=1}^d \|Q_j\|_2.
\end{align*}
\section{Discussion and conclusion}
We studied the problem of robust online convex optimization, where the noise vectors belong to a convex uncertainty set, and all constraints have to be satisfied for all the noise vectors. The online essence of this problem arises from the existence of an optimization oracle that returns a solution based on the noise vectors that are being updated at every iteration. 

We devised a hybrid quantum-classical meta-algorithm for the aforementioned problem that runs in time 
$$\widetilde{O}\left( \frac{(\sqrt{G_1G_\infty}/G_2)\sqrt{md}\mathcal{T}_\nabla + \min\{G_1G_\infty/G_2^2,m\}\mathcal{T}_{\mathcal{P}} + \mathcal{T}_\epsilon}{\epsilon^2} \max\{F, D^2G_2^2\}\right)$$ 
as compared to the corresponding classical meta-algorithm by \cite{Ben-Tal2015a} that runs in time $O\big(\frac{D^2G_2^2}{\epsilon^2}( md\mathcal{T}_\nabla + m\mathcal{T}_{\mathcal{P}} + \mathcal{T}_\epsilon)\big)$. If $G_1 G_\infty$ is comparable to $G_2^2$, then we obtain a quadratic improvement in terms of the number of noise vectors $m$ and their dimension $d$. If, however, $G_1 G_\infty = O(mG_2^2)$, then we only achieve a quadratic speedup in $d$. The speedup is due to techniques such as quantum state preparation, quantum norm estimation, and quantum multi-sampling. Even if, on the one hand, the quantum advantages provided are at most quadratic and depend on quantities like $G_1$ and $G_\infty$, on the other hand, we argue that our results also show that the meta-framework by Ben-Tal \emph{et al.} cannot be easily quantized to obtain large speedups, which is a useful consideration, since this framework could be see as a relatively
natural candidate for quantization.

The time $\mathcal T_{\mathcal{P}}$ required by the projection oracle to map the vectors $u^{\prime(t)}_1, \dots, u^{\prime(t)}_m$ back to the convex set $\mathcal U$ depends on the type/shape of $\mathcal U$. If $\mathcal{U} = \{u\in\mathbb R^d : \| u\|_2\leq 1\}$ is the Euclidean ball, then the projection of a vector $x\in\mathbb R^d$ onto $\mathcal U$
is linear in the number of nonzero entries of $x$ assuming without loss of generality that the Euclidean ball is centered at the origin. Projection of a vector $x$ onto the $d$-dimensional unit simplex or $\ell_1$-ball requires $O(d\log d)$ time~\citep{duchi2008efficient}. Lastly, if $\mathcal U$ is a polytope, the problem of projection onto a polytope $\{x\in\mathbb R^d: Ax\leq b\}$ for some $A\in\mathbb R^{m\times d}$ and $b\in\mathbb R^m$ can be represented as a conic quadratic programming problem, which can be solved using the interior point method whose time complexity is dominated by $\sqrt{md}(d^2 + m)$~\citep{ben2001lectures}. In cases where the number of constraints is beyond $\poly(m)$, the interior point method is no longer useful. Other approaches such as the ellipsoid method may be considered but the complexity scales polynomially with $d$. 

For the error analysis, we reproved, in \Cref{lem:regret_stochastic_gradient}, an upper bound on the regret of the online stochastic (sub)gradient descent~\citep{Hazan2016b}, where the assumption on the subgradient oracle is slightly changed. Using \Cref{lem:regret_stochastic_gradient}, we showed that our {hybrid quantum-classical algorithm produces a $3\epsilon$-approximate solution to the robust optimization problem. We then applied our algorithm to solving robust LPs and robust SDPs whose noise vectors belong to an ellipsoidal uncertainty set, including explicit examples in finance and structural design. 

For future work, one can consider more general robust optimization problems where the uncertainty set is not necessarily convex, e.g., robust quadratic programming. Some applications of quadratic programs in finance include the Global Minimum Variance Problem (GMVP)~\citep{goldfarb2003robust} and Markowitz mean-variance portfolio optimization problem~\citep{HarryMarkowitz1952}. A classical meta-algorithm for such robust optimization problems has been proposed by \cite{Ben-Tal2015a}, called dual-perturbation robust optimization algorithm. It would be interesting to see if quantum techniques could improve the algorithm's total runtime. Another open question is proving a lower bound on the number of calls to the oracle $\mathcal{O}_\epsilon$.

\paragraph{Acknowledgments} 
We thank Srijita Kundu for bringing the work~\cite{Ben-Tal2015a} to our attention. 
We acknowledge valuable discussions with Naixu Guo and Yassine Hamoudi, and also thank Yassine Hamoudi for pointing out \cite{hamoudi2022preparing}.
DL acknowledges funding from the QuantERA Project QOPT. 
This research is supported by the National Research Foundation, Singapore and A*STAR under its CQT Bridging Grant and its Quantum Engineering Programme under grant NRF2021-QEP2-02-P05. JFD acknowledges funding from ERC grant No.\ 810115-DYNASNET.

\bibliography{QORO}

\end{document}